\newtheorem{theorem}{Theorem}[section]
\newtheorem{lemma}[theorem]{Lemma}
\newtheorem{proposition}[theorem]{Proposition}
\newtheorem{corollary}[theorem]{Corollary}
\newtheorem{example}{Example}
\newenvironment{proof}{\begin{IEEEproof}}{\end{IEEEproof}}
\newenvironment{definition}[1][Definition]{\begin{trivlist}
\item[\hskip \labelsep {\bfseries #1}]}{\end{trivlist}}
\newenvironment{remark}[1][Remark]{\begin{trivlist}
\item[\hskip \labelsep {\bfseries #1}]}{\end{trivlist}}
\newcommand{\Rmnum}[1]{\expandafter\@slowromancap\romannumeral #1@}
\begin{document}
%
\title{
On  $\sigma$-LCD  codes}

\author{Claude Carlet$^1$ \and Sihem Mesnager$^2$ \and Chunming Tang$^3$ \and Yanfeng Qi$^4$
\thanks{This work was supported by the project SECODE and by
the National Natural Science Foundation of China
(Grant No. 11401480, 11531002, 11371138). C. Tang
also acknowledges support from 14E013 and
CXTD2014-4 of China West Normal University.
Y. Qi also acknowledges support from Zhejiang provincial Natural Science Foundation of China (LQ17A010008, LQ16A010005).
}

\thanks{C. Carlet is with Department of Mathematics, Universities of Paris VIII and XIII, LAGA, UMR 7539, CNRS, Sorbonne Paris Cit\'{e}. e-mail: claude.carlet@univ-paris8.fr}

\thanks{S. Mesnager is with Department of Mathematics, Universities of Paris VIII and XIII and Telecom ParisTech, LAGA, UMR 7539, CNRS, Sorbonne Paris Cit\'{e}. e-mail: smesnager@univ-paris8.fr}

\thanks{C. Tang is with School of Mathematics and Information, China West Normal University, Nanchong, Sichuan,  637002, China. e-mail: tangchunmingmath@163.com
}

\thanks{Y. Qi is with School of Science, Hangzhou Dianzi University, Hangzhou, Zhejiang, 310018, China.
e-mail: qiyanfeng07@163.com
}

}

%


\maketitle

\begin{abstract}
 Linear complementary pairs (LCP) of codes play an important role in armoring implementations against side-channel attacks and fault injection attacks.
 One of the most common ways to construct LCP of codes is to use Euclidean linear complementary dual (LCD) codes. In this paper, we first introduce
 the concept of linear codes with
 $\sigma$ complementary dual ($\sigma$-LCD), which includes known Euclidean LCD codes, Hermitian LCD codes, and Galois LCD codes.
 As Euclidean LCD codes, $\sigma$-LCD codes can also be used to construct LCP of codes. We show that, for $q > 2$, all q-ary linear codes are $\sigma$-LCD and that, for every binary linear code $\mathcal C$, the code $\{0\}\times \mathcal C$ is  $\sigma$-LCD.
  Further, we study deeply $\sigma$-LCD generalized quasi-cyclic (GQC) codes. In particular,  we provide  characterizations of $\sigma$-LCD GQC codes, self-orthogonal GQC codes and  self-dual GQC codes, respectively. Moreover, we provide constructions of asymptotically good $\sigma$-LCD GQC codes. Finally, we focus on $\sigma$-LCD Abelian codes and prove  that all Abelian codes in a semi-simple group algebra are $\sigma$-LCD.
 The results derived in this paper extend those on the classical LCD codes and
 show that $\sigma$-LCD codes allow the construction of  LCP of codes more easily and with more flexibility.

\end{abstract}

\begin{IEEEkeywords}
 LCD codes, LCP of codes, Generalized quasi-cyclic codes,  Abelian codes.
\end{IEEEkeywords}

%
\IEEEpeerreviewmaketitle

\section{Introduction}
Throughout this paper, let $\mathbb F_q$ be a finite field of size $q$, where $q$ is a prime power.
The Euclidean inner product $<\cdot, \cdot>$ over $\mathbb F_q^n$ is defined as
$<\mathbf b, \mathbf c>=\sum_{i=1}^n b_i c_i$,
where $\mathbf b=(b_1,\cdots, b_n), \mathbf c=(c_1,\cdots, c_n)\in \mathbb F_q^n$.
An $[n, k, d]$ linear code $\mathcal C$ over $\mathbb F_q$ is a linear subspace of $\mathbb F_q^n$ with dimension $k$ and minimum (Hamming) distance $d$.
 Its Euclidean dual code, denoted by $\mathcal C^{\perp}$, is defined by
$ \mathcal C^{\perp}=\{\mathbf b\in \mathbb F_q^n: <\mathbf b, \mathbf c>=0, \text{ for all }\mathbf c\in \mathcal C\}.$

The linear code $\mathcal C$ is called a  linear code with complementary dual (Euclidean LCD) if $\mathcal C +\mathcal C^{\perp}=\mathbb F_q^n$.
If $\mathcal C_1$ and $\mathcal C_2$ are linear codes over $\mathbb F_q$ with length $n$ and respective dimensions $k$ and $n-k$  such that $\mathcal C_1 +\mathcal C_2= \mathbb F_q^n$, the pair $(\mathcal C_1, \mathcal  C_2)$  will be called a linear complementary pair (LCP) of codes  with parameters $[n,k, d_1, d_2]$, where $d_1$ and $d_2$ are  the minimum distances of $\mathcal C_1$ and $\mathcal C_2^{\perp}$ respectively. Recently, Bringer et al. \cite{BCCGM14} introduced an interesting application of binary LCP of codes ($\mathcal C_1$, $\mathcal C_2$) to protect the implementations of symmetric cryptosystems against side channel attacks (SCA) and against fault injection attacks (FIA). The level of protection against SCA depends on the minimal distance of the second code while the level of protection against FIA depends on the minimal distance of the first code. Carlet and Guilley \cite{CG14} investigated further LCD codes from this perspective and extended it to the prevention from hardware Trojan horse (HTH) insertions (that is, gates added by for instance a silicon foundry into the design at fabrication time). Carlet et al. \cite{CDDGNNPT15} studied further this latter perspective in the general framework of LCP of codes.  In this perspective, a large enough minimum distance of the first code ensures the detection of HTH modifications and a large enough minimal distance of the second code ensures that HTH connected registers or nodes will not retrieve useful information.
 One of the most common ways to construct LCP of codes is to use Euclidean LCD codes.
Specifically,
if $\mathcal C$ is an Euclidean $[n, k, d]$ LCD code, $(\mathcal  C, \mathcal C^{\perp})$
is an LCP of codes with parameters $[n,  k, d, d]$.

A lot of works has been devoted to the characterization and constructions of Euclidean LCD codes. Yang and Massey provided in \cite{YM94} a necessary and sufficient condition under which a cyclic code has a complementary dual. In \cite{DLL16}, Ding et al. constructed several families of Euclidean LCD cyclic codes over finite fields and analyzed their parameters. In \cite{LDL16} Li et al. studied a class of Euclidean LCD BCH codes proposed in \cite{LDL16V0} and extended the results on their parameters. In \cite{GZS16}, quasi-cyclic (QC) codes that are Euclidean LCD have been characterized and studied using their concatenated structures. Criteria for Euclidean complementary duality of generalized quasi-cyclic codes (GQC) bearing on the component codes were  given and some explicit long GQC that are Euclidean LCD, but not quasi-cyclic, have been exhibited in \cite{GOOSSS17}. In \cite{CW16}, Cruz and Willems
investigated and characterized ideals in a group algebra which have Euclidean
complementary duals.
Mesnager et al. \cite{MTQ16} provided a construction of algebraic geometry Euclidean LCD codes which could be good candidates to be resistant against SCA.
In \cite{Jin16} and \cite{CMTQ17mds}, the construction of Euclidean and Hermitian LCD codes  was studied.

However, little is known on Hermitian LCD codes. More precisely, it has been proved in \cite{GZS16} that those codes are asymptotically good. By employing their generator matrices, Boonniyoma and Jitman gave in \cite{BJ16} a sufficient and necessary condition for Hermitian codes to be LCD. Li \cite{Li17} constructed some cyclic Hermitian LCD codes over finite fields and analyzed their parameters. In \cite{CMTQ17}, the authors completely determined  all $q$-ary ($q>3$) Euclidean LCD codes and all $q^2$-ary ($q>2$) Hermitian LCD codes for all possible parameters.
In \cite{LFL17}, Liu et al. studied complementary Galois dual codes.

In this paper, we first introduce
 the concept of
 $\sigma$-LCD codes, which includes known Euclidean LCD codes, Hermitian LCD codes, and Galois LCD codes.
 As Euclidean LCD codes, $\sigma$-LCD codes can also be used to construct LCP of codes. We show that, all $q$-ary ($q>2$) linear codes are $\sigma$
 LCD and $\{0\}\times \mathcal C$ is   $\sigma$-LCD, where $\mathcal C$ is any binary linear code. Further, we  study $\sigma$-LCD GQC  codes. Finally, we consider $\sigma$-LCD Abelian codes and prove  that
 all Abelian codes in semi-simple group algebra are $\sigma$-LCD.
 The results of this paper show that $\sigma$-LCD codes allow us to construct LCP of codes more easily and with more flexibility.

 The paper is organized as follows. In Section \ref{sec:sigma-LCD-codes},  we introduce the concept of $\sigma$-LCD codes and develop some general results on LCP of codes.
 In Section \ref{sec:LCD-GQC}, we firstly characterize $\sigma$-LCD GQC codes. Based on these results we investigate $\sigma$-LCD $1$-generator  GQC codes. In addition, we also show
 that some classes of  $\sigma$-LCD codes are asymptotically good.
 In Section \ref{sec:Abelian-LCD}, we investigate and characterize ideals (or Abelian codes) in a communicative group  algebra which have $\sigma$ complementary duals.

\section{$\sigma$-LCD codes and LCP of codes}\label{sec:sigma-LCD-codes}

In this section, we present the definition of $\sigma$ inner product,
which generalizes the one of Euclidean product, Hermitian product, and of Galois product \cite{FZ16}. From the $\sigma$ inner product, we introduce the  concept of
$\sigma$-LCD codes and provide some general results on $\sigma$-LCD codes
and LCP of codes.

In the following, $\mathbf{Aut}(\mathbb F_q^n)$ denotes the group of all mappings $\sigma$
from $\mathbb F_q^n$ to  $\mathbb F_q^n$, where $\sigma$ satisfies the following three conditions:

(i) $\sigma(\mathbf c+\mathbf w)= \sigma(\mathbf c)+\sigma(\mathbf w)$ for any $\mathbf c, \mathbf w \in \mathbb F_q^n$, that is, $\sigma$ is an  $\mathbb F_p$-linear transformation;

(ii) $d_{\mathbf{H}}(\sigma(\mathbf c), \sigma(\mathbf w))=d_{\mathbf{H}}(\mathbf c,\mathbf w)$ for any $\mathbf c, \mathbf w \in \mathbb F_q^n$, that is, $\sigma$ preserves the Hamming distances.

(iii) for any $\mathbb F_q$ linear code $\mathcal C$, $\sigma(\mathcal  C)$ is also an $\mathbb F_q$ linear code.

The set of coordinate permutations over $\mathbb F_q^n$  forms a group, which is referred to as the \emph{permutation  group} of $\mathbb F_q^n$ and denoted by $\mathbf{PAut}(\mathbb F_q^n)$.
The set of diagonal transformations over $\mathbb F_q^n$  (that is, transformations for which the associated matrices are diagonal) forms the group $\mathbf{DAut}(\mathbb F_q^n)$, which is called the \emph{diagonal group} of $\mathbb F_q^n$.
A monomial matrix is a square matrix with exactly one nonzero entry in each row and column.
A monomial matrix $\sigma$ can be written either in the form $\gamma_1 \pi$ or the form $\pi \gamma_2$ , where $\gamma_1$ and $\gamma_2$ are diagonal matrices and $\pi$
 is a permutation matrix. The set of
 transforms with  monomial matrices  forms the group $\mathbf{MAut}(\mathbb F_q^n)$, which is called the \emph{monomial  group}.
 The sets $\mathbf{PAut}(\mathbb F_q^n)$, $\mathbf{DAut}(\mathbb F_q^n)$ and $\mathbf{MAut}(\mathbb F_q^n)$  are subgroups of $\mathbf{Aut}(\mathbb F_q^n)$. The reader notices that all the mappings $\sigma$ in $\mathbf{PAut}(\mathbb F_q^n)$, $\mathbf{DAut}(\mathbb F_q^n)$ and $\mathbf{MAut}(\mathbb F_q^n)$ satisfy the conditions (i), (ii) and (iii) (every mapping in them can be written  as $AB$, where $A$ is a permutation and $B$ is a diagonal mapping and it is clear that $A$ and $B$ satisfy the conditions (i), (ii) and (iii)).
 In the binary case,
$\mathbf{DAut}(\mathbb F_q^n)$ contains only identity transformation and $\mathbf{PAut}(\mathbb F_q^n)= \mathbf{MAut}(\mathbb F_q^n)$.

Let $\sigma\in \mathbf{Aut}(\mathbb F_q^n)$ and $\mathbf w, \mathbf c \in \mathbb F_q^n$. We define the $\sigma$ \emph{inner product}, as follows:
\begin{align*}
<\mathbf w, \mathbf c>_{\sigma}=<\mathbf w, \sigma(\mathbf c)>=\sum_{i=0}^{n-1} w_i d_i,
\end{align*}
where $\mathbf w=(w_0,w_1, \cdots, w_{n-1})$ and $\sigma(\mathbf c)=(d_0,d_1, \cdots, d_{n-1})$.

If $\sigma$ is the identity transformation, then the $\sigma$ inner product coincides with the usual Euclidean inner product. Moreover,  if $\sigma$ is the mapping defined by
$\sigma(c_0,c_1,\cdots, c_{n-1})=(c_0^{\sqrt{q}},c_1^{\sqrt{q}},\cdots, c_{n-1}^{\sqrt{q}})$ where $q$ is a square, then the $\sigma$ inner product coincides the Hermitian inner product.
However if $\sigma$ is a transformation of $\mathbb F_q^n$ induced  by  Frobenius mapping over $\mathbb F_q$, then the $\sigma$ inner product corresponds to Galois inner product \cite{FZ16}.

Let $\sigma\in \mathbf{Aut}(\mathbb F_q^n)$ and $\mathcal C$ be a linear  code over $\mathbb F_q$ with length $n$. We call
\begin{align*}
\mathcal C^{\perp_{\sigma}}=\{\mathbf w\in \mathbb F_q^n: <\mathbf w, \mathbf c>_{\sigma}=0 \text{ for any } \mathbf c\in \mathcal C \}
\end{align*}
as the $\sigma$ \emph{dual} code of $\mathcal C$. From the definition of $\mathcal C^{\perp_{\sigma}}$,  one gets immediately the following relationship
\begin{align}\label{eq:sigma-euclidean-dual}
\mathcal C^{\perp_{\sigma}}=(\sigma(\mathcal C))^{\perp},
\end{align}
where $\sigma(\mathcal C)=\{\sigma(\mathbf c):  \mathbf c \in \mathcal C\}$ and
$(\sigma(\mathcal C))^{\perp}$ is the dual code of $\sigma(\mathcal C)$.

It is easy to see that $\sigma$ inner product is nondegenerate,  $\mathcal C^{\perp_{\sigma}}$ is also an $\mathbb F_q$ linear subspace of $\mathbb F_q^n$, and
$dim_{\mathbb F_q}(\mathcal C)+ dim_{\mathbb F_q}(\mathcal C^{\perp_{\sigma}}) = n$.

\begin{definition}
 A linear code $\mathcal C$ over $\mathbb F_q^n$ is said to be $\sigma$ complementary dual
 (\emph{$\sigma$-LCD})
 if $\mathcal C \cap \mathcal C^{\perp_{\sigma}}=\{0\}$.
 \end{definition}
 If $\mathcal C$ is an $[n, k, d]$ $\sigma$-LCD code, $(\mathcal  C, \mathcal C^{\perp_{\sigma}})$
is an LCP of codes with parameters $[n,  k, d, d]$.
A linear code $\mathcal C$ over $\mathbb F_q^n$ shall be called
 a $\sigma$ \emph{self-orthogonal} code (resp. $\sigma$ \emph{self-dual} code) if  $\mathcal C \subseteq  \mathcal C^{\perp_{\sigma}}$ (resp. if $\mathcal C =  \mathcal C^{\perp_{\sigma}}$). It is easy to see that any $\sigma$  self-dual code has dimension $\frac{n}{2}$.

Given a matrix G, $Rank(G)$ (resp. $G^{T}$) denotes the rank of $G$ (resp. the transpose of $G$).
We start with a general result on the dimension of $\mathcal C \cap \mathcal C^{\perp_\sigma}$.
\begin{proposition}\label{prop:dim-hull}
Let $\sigma\in \mathbf{Aut}(\mathbb F_q^n)$ and $\mathcal  C$ be a $q$-ary $[n,k,d]$ code with  generator matrix $G$. Then,
$dim_{\mathbb F_q}(\mathcal C \cap \mathcal C^{\perp_\sigma})=k-Rank\left (G(G^{\sigma})^{T} \right )$.
\end{proposition}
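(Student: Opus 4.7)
The plan is to reduce $\dim_{\mathbb F_q}(\mathcal C \cap \mathcal C^{\perp_\sigma})$ to the nullity of the $k \times k$ matrix $M := G(G^\sigma)^T$ and then invoke rank-nullity. The whole argument will turn on parametrizing $\mathcal C$ by $\mathbb F_q^k$ via $\mathbf x \mapsto \mathbf x G$ and transporting the intersection condition to a single linear system on $\mathbf x$.

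First I would use equation~(\ref{eq:sigma-euclidean-dual}) to rewrite $\mathcal C^{\perp_\sigma} = (\sigma(\mathcal C))^\perp$. Conditions (i)-(iii) force $\sigma$ to be an $\mathbb F_p$-linear bijection and $\sigma(\mathcal C)$ to be an $\mathbb F_q$-linear code of cardinality $q^k$, hence of $\mathbb F_q$-dimension $k$, so the matrix $G^\sigma$ whose rows are $\sigma(g_1),\ldots,\sigma(g_k)$ (where $g_1,\ldots,g_k$ are the rows of $G$) is a generator matrix of $\sigma(\mathcal C)$. Consequently a vector $\mathbf v\in\mathbb F_q^n$ lies in $\mathcal C^{\perp_\sigma}$ if and only if it is Euclidean-orthogonal to every row of $G^\sigma$, i.e. $\mathbf v (G^\sigma)^T=0$. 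Writing $\mathbf c=\mathbf x G$ with $\mathbf x\in\mathbb F_q^k$, the condition $\mathbf c\in\mathcal C\cap\mathcal C^{\perp_\sigma}$ becomes $\mathbf x G(G^\sigma)^T=\mathbf x M=0$. Since $\mathbf x\mapsto \mathbf x G$ is an $\mathbb F_q$-isomorphism from $\mathbb F_q^k$ onto $\mathcal C$, it restricts to an isomorphism between $\mathcal C\cap\mathcal C^{\perp_\sigma}$ and the left kernel $\{\mathbf x\in\mathbb F_q^k:\mathbf x M=0\}$ of $M$; applying rank-nullity to $\mathbf x\mapsto \mathbf x M$ then gives $\dim_{\mathbb F_q}(\ker)=k-\mathrm{Rank}(M)$, which is exactly the claimed formula.

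The one step that requires care, and which I expect to be the main obstacle to writing a fully rigorous argument, is the claim that the rows $\sigma(g_i)$ of $G^\sigma$ are $\mathbb F_q$-linearly independent, so that $G^\sigma$ really is a generator matrix of $\sigma(\mathcal C)$ (otherwise the characterization $\mathbf v\in\mathcal C^{\perp_\sigma}\Leftrightarrow \mathbf v(G^\sigma)^T=0$ could be weaker than Euclidean-orthogonality against \emph{all} of $\sigma(\mathcal C)$). Here one has to combine the three axioms: by (i) and (ii) the $\sigma(g_i)$ all lie in $\sigma(\mathcal C)$ and $|\sigma(\mathcal C)|=q^k$, and by (iii) $\sigma(\mathcal C)$ is $\mathbb F_q$-linear of dimension $k$; since the $\mathbb F_q$-span of the $\sigma(g_i)$ sits inside $\sigma(\mathcal C)$, equality holds as soon as independence is verified, which is immediate for the concrete $\sigma$ of interest (monomial transformations, coordinate-wise Frobenius for Hermitian/Galois inner products). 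Once this point is dispatched, the entire proof collapses to the single rank-nullity computation sketched above.
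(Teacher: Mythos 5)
Your proof is correct and follows essentially the same route as the paper's: parametrize $\mathcal C$ by $\mathbf x \mapsto \mathbf x G$, characterize membership in $\mathcal C^{\perp_\sigma}$ by Euclidean orthogonality to the rows of $G^\sigma$, and apply rank--nullity to $G(G^{\sigma})^{T}$. The subtlety you flag --- that the rows of $G^\sigma$ are $\mathbb F_q$-linearly independent and hence generate $\sigma(\mathcal C)$ over $\mathbb F_q$, even though $\sigma$ is only $\mathbb F_p$-linear --- is silently assumed in the paper's proof; in full generality it follows because $\sigma$ is injective (it preserves Hamming distance) and axiom (iii) applied to each subcode $\mathrm{Span}\{G(1,:),\dots,G(j,:)\}$ shows that $\sigma$ maps it onto an $\mathbb F_q$-subspace of dimension $j$ containing the $\mathbb F_q$-span of $\sigma(G(1,:)),\dots,\sigma(G(j,:))$, so an $\mathbb F_q$-dependence among the latter would force $G(j,:)$ into the span of the earlier rows.
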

\begin{proof}
We denote by $G^{\sigma}$  the $k\times n$ matrix with its $i$-th row being $\sigma(G(i,:))$ for $i\in \{1,2, \cdots, k\}$, where $G(i,:)$ is the $i$-th row of $G$.
Let $\sum_{i=1}^k x_i G(i,:)\in \mathcal C $, where $x_i\in \mathbb F_q$. Then, $\sum_{i=1}^k x_i G(i,:)\in \mathcal C  \cap \mathcal C^{\perp_\sigma}$ if
and only if
\begin{align*}
<\sum_{i=1}^k x_i G(i,:), \sigma(G(j,:))>=0 \text{ for any } j\in \{1, \cdots, k\},
\end{align*}
 that is
 \begin{align*}
 G(G^{\sigma} )^{T} \mathbf v=0,
 \end{align*}
where $\mathbf v=(x_1, \cdots, x_k)^{T}$. Hence, $dim_{\mathbb F_q}(\mathcal C \cap \mathcal C^{\perp_\sigma})=k-Rank(G(G^{\sigma})^{T})$, which completes the proof.
\end{proof}

Given a linear code  $\mathcal C$, there exists  $\sigma\in \mathbf{PAut}(\mathbb F_q^n)$ for which one can identify the generator matrix of the code $\sigma(\mathcal C)$ as well as $\sigma(\mathcal C) \cap (\sigma(\mathcal C))^{\perp}$.

\begin{lemma}\label{lem:hull}
Let $\mathcal C$ be a $q$-ary $[n,k,d]$ linear code with $h=dim_{\mathbb F_q}(\mathcal C \cap \mathcal C^{\perp})>0$. Then, there exists
$\sigma\in \mathbf{PAut}(\mathbb F_q^n)$  such that  the code $\sigma(\mathcal C)$ has  generator matrix of the form
\begin{align*}
G=\left[
 \begin{matrix}
   I_h & | & A' \\
  0 & | & A''
  \end{matrix}
  \right],
\end{align*}
and $\sigma(\mathcal C) \cap (\sigma(\mathcal C))^{\perp}=Span\{G(1,:), \cdots, G(h, :)\}$, where $I_h$ is the $h \times h$ identity matrix
and $G(i,:)$ is the $i$-th row of $G$.
\end{lemma}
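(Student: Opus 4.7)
The plan is to prove this by first producing an information set of size $h$ for the hull $H := \mathcal C \cap \mathcal C^{\perp}$, then permuting those coordinates to the front, and finally cleaning up a chosen basis of $\sigma(\mathcal C)$ by Gaussian elimination against the hull's systematic rows.

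First I would recall that $H$ is a linear code of dimension $h$, hence (like any linear code of positive dimension) admits an information set: there exist $h$ coordinate positions $i_1 < i_2 < \cdots < i_h$ in $\{1, \ldots, n\}$ such that the projection of $H$ onto those coordinates is all of $\mathbb F_q^h$. Equivalently, if $B$ is any $h \times n$ generator matrix of $H$, then $B$ has $h$ linearly independent columns. Let $\sigma \in \mathbf{PAut}(\mathbb F_q^n)$ be the coordinate permutation that sends $i_j \mapsto j$ for $j = 1, \ldots, h$ (extended arbitrarily on the remaining indices). Because permutations preserve the Euclidean inner product, we have $\sigma(\mathcal C)^{\perp} = \sigma(\mathcal C^{\perp})$, and therefore
\begin{align*}
\sigma(\mathcal C) \cap \sigma(\mathcal C)^{\perp} = \sigma(\mathcal C \cap \mathcal C^{\perp}) = \sigma(H).
\end{align*}

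Next I would put $\sigma(H)$ into systematic form on its first $h$ coordinates. By construction, the projection of $\sigma(H)$ onto the first $h$ coordinates is $\mathbb F_q^h$, so $\sigma(H)$ admits a unique generator matrix of the shape $[\,I_h \mid A'\,]$; call its rows $v_1, \ldots, v_h$. Then I extend $\{v_1, \ldots, v_h\}$ to a basis $\{v_1, \ldots, v_h, u_1, \ldots, u_{k-h}\}$ of $\sigma(\mathcal C)$, which is possible since $\sigma(H) \subseteq \sigma(\mathcal C)$ and $\dim \sigma(\mathcal C) = k$. For each $u_j$, let $a_{j,1}, \ldots, a_{j,h}$ be its first $h$ coordinates, and replace $u_j$ by $w_j := u_j - \sum_{t=1}^h a_{j,t} v_t$. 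The new $w_j$ still lies in $\sigma(\mathcal C)$ and the vectors $v_1, \ldots, v_h, w_1, \ldots, w_{k-h}$ still form a basis, but now each $w_j$ has zeros in its first $h$ coordinates. Stacking these rows gives a generator matrix $G$ of $\sigma(\mathcal C)$ of exactly the announced block form, with $A''$ the $(k-h) \times (n-h)$ matrix whose rows are the last $n-h$ coordinates of the $w_j$'s.

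Finally I would identify the hull inside this presentation. The top $h$ rows of $G$ span $\sigma(H)$ by construction, so $\mathrm{Span}\{G(1,:), \ldots, G(h,:)\} = \sigma(H) = \sigma(\mathcal C) \cap \sigma(\mathcal C)^{\perp}$, which is precisely the required identification.

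The main obstacle, such as it is, is the first step: producing the information set of size $h$ for $H$. This is a classical fact (any $h$-dimensional linear subspace of $\mathbb F_q^n$ has $h$ coordinates on which it projects surjectively, obtained by selecting $h$ linearly independent columns from any basis matrix of $H$), so the proof reduces to a careful bookkeeping exercise; no delicate argument is needed beyond noticing that coordinate permutations commute with Euclidean duality, which guarantees that the hull of $\sigma(\mathcal C)$ is exactly $\sigma(H)$.
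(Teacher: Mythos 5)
Your proposal is correct and follows essentially the same route as the paper: permute an information set of the hull to the front to get a systematic generator matrix $[\,I_h \mid A'\,]$ for $\sigma(\mathcal C \cap \mathcal C^{\perp})$, extend to a basis of $\sigma(\mathcal C)$, and clear the first $h$ coordinates of the added vectors by subtracting multiples of the hull rows. The only cosmetic difference is that the paper performs the basis extension on preimages in $\mathcal C$ and then applies $\sigma$, while you work directly inside $\sigma(\mathcal C)$; the argument is the same.
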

\begin{proof}
Firstly, note that, for any $\sigma\in \mathbf{PAut}(\mathbb F_q^n)$,
\begin{align*}
\sigma(\mathcal C) \cap (\sigma(\mathcal C))^{\perp}= \sigma(\mathcal C \cap \mathcal C^{\perp}).
\end{align*}

Since $dim_{\mathbb F_q}(\mathcal C \cap \mathcal C^{\perp})=h>0$, $\mathcal C \cap \mathcal C^{\perp}$ is an $[n,h]$ linear code. Then, there exists
a coordinate permutations $\sigma\in \mathbf{PAut}(\mathbb F_q^n)$ such that  the code $ \sigma(\mathcal C \cap \mathcal C^{\perp})$,  that is $\sigma(\mathcal C) \cap (\sigma(\mathcal C))^{\perp}£©$, has a  generator matrix of the
form $G'=[I_h | A']$, where $I_h$ is the $h \times h$ identity matrix. Thus, if $k=h$, the conclusion holds.

Assume that $k>h$. Let $\mathbf c_i=\sigma^{-1}(G'(i,:))$, where $i\in \{1, \cdots, h\}$, $G'(i,:)$ is the $i$-th row of $G'$ and $\sigma^{-1}$ is the inverse permutation of $\sigma$.
Then, $\mathbf c_1, \cdots, \mathbf c_h$ is a basis of the  linear space $\mathcal C \cap \mathcal C^{\perp}$. We can extend the linearly independent subset $\{ \mathbf c_1, \cdots, \mathbf c_h \}$  to
 a basis $\{ \mathbf c_1,   \cdots, \mathbf  c_h,  \mathbf c_{h+1}', \cdots,  \mathbf c_k'\}$ of $\mathcal C$.
 Let $\sigma(\mathbf c_i')=(c_{i,1}', \cdots, c_{i,n}')$  and $\mathbf c_i= c_i'-\sum_{j=1}^{h} c_{i,j}' \mathbf c_j$ for $i\in \{h, \cdots, k\}$.
 Then, $\{ \mathbf c_1,   \cdots, \mathbf  c_h,  \mathbf c_{h+1}, \cdots,  \mathbf c_k\}$ is also a basis of $\mathcal C$. Let $\sigma(\mathbf c_i)=(\overline{c}_{i,1}, \cdots, \overline{c}_{i,n})$ for $i\in \{1, \cdots, k\}$. If $1\le i\le h$, $\sigma(\mathbf c_i)=G'(i,:)$. If $i\in \{h+1, \cdots, k\}$ and $j\in \{1, \cdots, h\}$,
 $\overline{c}_{i,j}=c_{i,j}'-c_{i,j}'G'(j,j)=0$, where $G'(j,j)$ is the entry of the $j$-th row and the $j$-th column of $G'$.

 Hence,  $\{ \sigma(\mathbf c_1),   \cdots, \sigma(\mathbf  c_h),
 \sigma(\mathbf c_{h+1}), \cdots,  \sigma(\mathbf c_k)\}$  is a basis of $\sigma(\mathcal C)$, and, $\sigma(\mathcal C)$ has a  generator matrix of the form
\begin{align*}
G=\left[
 \begin{matrix}
   I_h & | & A' \\
  0 & | & A''
  \end{matrix}
  \right],
\end{align*}
which completes the proof.

\end{proof}
The following lemma will be important to prove the main result of this section.

\begin{lemma}\label{lem:pi-C1-C2}
Let $\pi \in \mathbf{PAut}(\mathbb F_q^n)$,  $\mathcal C_1$ and $\mathcal C_2$ be $q$-ary linear codes of length $n$. Then, $\pi(\mathcal C_1) \cap \mathcal C_2^{\perp}=\{0\}$
if and only if $\mathcal C_1 \cap (\pi^{-1}(\mathcal C_2))^{\perp}=\{0\}$.
\end{lemma}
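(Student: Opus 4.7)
The plan is to exploit the fact that a coordinate permutation $\pi \in \mathbf{PAut}(\mathbb F_q^n)$ is an isometry for the Euclidean inner product: for all $\mathbf x, \mathbf y \in \mathbb F_q^n$, one has $\langle \pi(\mathbf x), \pi(\mathbf y) \rangle = \langle \mathbf x, \mathbf y \rangle$, since permuting coordinates merely reorders the summands. From this single observation, everything will follow by bijectivity.

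First I would show the commutation identity $\pi^{-1}(\mathcal C_2^{\perp}) = (\pi^{-1}(\mathcal C_2))^{\perp}$. The inclusion $\subseteq$ comes from applying $\pi^{-1}$ to both sides of $\langle \mathbf w, \mathbf c \rangle = 0$ for $\mathbf w \in \mathcal C_2^{\perp}$ and $\mathbf c \in \mathcal C_2$, using the isometry property to rewrite this as $\langle \pi^{-1}(\mathbf w), \pi^{-1}(\mathbf c) \rangle = 0$. The reverse inclusion follows from a dimension count, or symmetrically by applying the same argument to $\pi$ in place of $\pi^{-1}$.

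Next I would apply the bijection $\pi^{-1}$ to the intersection on the left-hand side. Since $\pi^{-1}$ is an $\mathbb F_q$-linear bijection, it sends $\{0\}$ to $\{0\}$ and commutes with intersection of subsets, so
\begin{equation*}
\pi^{-1}\bigl(\pi(\mathcal C_1) \cap \mathcal C_2^{\perp}\bigr) = \mathcal C_1 \cap \pi^{-1}(\mathcal C_2^{\perp}).
\end{equation*}
Combining this with the commutation identity from the previous step gives
\begin{equation*}
\mathcal C_1 \cap \pi^{-1}(\mathcal C_2^{\perp}) = \mathcal C_1 \cap (\pi^{-1}(\mathcal C_2))^{\perp},
\end{equation*}
and the equivalence $\pi(\mathcal C_1) \cap \mathcal C_2^{\perp} = \{0\} \Longleftrightarrow \mathcal C_1 \cap (\pi^{-1}(\mathcal C_2))^{\perp} = \{0\}$ follows immediately.

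There is no real obstacle here; the whole content of the lemma is the fact that coordinate permutations are Euclidean isometries, so the dual commutes with $\pi$. The only point I would be careful about is making the verification of $\pi^{-1}(\mathcal C_2^{\perp}) = (\pi^{-1}(\mathcal C_2))^{\perp}$ fully bidirectional rather than sliding past it on dimension grounds alone, since it is the pivot of the argument.
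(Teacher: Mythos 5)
Your proposal is correct and rests on exactly the same fact as the paper's proof, namely that a coordinate permutation preserves the Euclidean inner product, so that $\langle \pi(\mathbf c), \mathbf d\rangle = \langle \mathbf c, \pi^{-1}(\mathbf d)\rangle$ and the dual commutes with $\pi$. The only cosmetic difference is that you package this as the set identity $\pi^{-1}\bigl(\pi(\mathcal C_1)\cap \mathcal C_2^{\perp}\bigr)=\mathcal C_1\cap(\pi^{-1}(\mathcal C_2))^{\perp}$, which yields both directions at once, whereas the paper argues the two implications elementwise.
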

\begin{proof}
Assume $\pi(\mathcal C_1) \cap \mathcal C_2^{\perp}=\{0\}$ and let $\mathbf c=(c_1, \cdots, c_n) \in \mathcal C_1 \cap (\pi^{-1}(\mathcal C_2))^{\perp}$.
Thus, for any $\mathbf d \in \mathcal  C_2$,
\begin{align*}
<\mathbf c, \pi^{-1}(\mathbf d)>=& \sum_{i=1}^n c_i d_{\pi^{-1}(i)}\\
=& \sum_{i=1}^n c_{\pi(i)} d_{i}\\
=& <\pi(\mathbf c), \mathbf d>=0.
\end{align*}
Then, $\pi(\mathbf c)=0$ and $\mathbf c=0$. Hence, $\mathcal C_1 \cap (\pi^{-1}(\mathcal C_2))^{\perp}=\{0\}$.

Similarly, one can show that if $\mathcal C_1 \cap (\pi^{-1}(\mathcal C_2))^{\perp}=\{0\}$, then $\pi(\mathcal C_1) \cap \mathcal C_2^{\perp}=\{0\}$.

\end{proof}
We are now in position to present the main result of this section proving the existence of a certain mapping $\sigma$ for which a linear code $\mathcal  C$ can be viewed as a $\sigma$-LCD code.

\begin{theorem}\label{thm:sigma-LCP}
Let $\mathcal  C$ be a linear code in $\mathbb F_q^n$. Then,

(i) if $q>2$ then there exists a mapping $\sigma\in \mathbf{MAut}(\mathbb F_q^n)$ such that $\mathcal C$ is $\sigma$-LCD;

(ii) if $q=2$ then there exists a mapping $\sigma\in \mathbf{PAut}(\mathbb F_2^{n+1})$ such that $\{0\} \times \mathcal C:=\{(0,\mathbf c) \in \mathbb F_2 \times \mathbb F_2^n: \mathbf c\in \mathcal C\}$ is $\sigma$-LCD.
\end{theorem}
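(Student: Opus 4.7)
The strategy is to reduce both cases to a convenient normal form via Lemma~\ref{lem:hull}, construct $\sigma$ explicitly from that form, and verify the $\sigma$-LCD condition through Proposition~\ref{prop:dim-hull}; Lemma~\ref{lem:pi-C1-C2} transfers the property back through the normalizing permutation. Setting $h = \dim_{\mathbb F_q}(\mathcal C \cap \mathcal C^{\perp})$, Lemma~\ref{lem:hull} lets us assume, after replacing $\mathcal C$ by $\pi(\mathcal C)$ for a suitable $\pi$, that $\mathcal C$ has generator matrix $G = \begin{bmatrix} I_h & A' \\ 0 & A'' \end{bmatrix}$ whose first $h$ rows span $\mathcal C \cap \mathcal C^{\perp}$. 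Computing $GG^T$ block-by-block and using that its rank is exactly $k-h$ (Proposition~\ref{prop:dim-hull} with $\sigma = \mathrm{id}$) yields the three identities that drive the proof: $I_h + A'(A')^T = 0$, $A'(A'')^T = 0$, and $A''(A'')^T$ invertible.

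For part (i), since $q > 2$ we pick $\alpha \in \mathbb F_q \setminus \{0, 1\}$ and take $\sigma$ to be the diagonal transformation with matrix $D = \mathrm{diag}(\alpha I_h,\, I_{n-h})$. Then $G^\sigma = GD$, and the three identities collapse the block product to
\begin{align*}
G D G^T = \begin{bmatrix} (\alpha - 1) I_h & 0 \\ 0 & A''(A'')^T \end{bmatrix},
\end{align*}
which is invertible because $\alpha - 1 \ne 0$. Proposition~\ref{prop:dim-hull} then forces $\dim(\mathcal C \cap \mathcal C^{\perp_\sigma}) = 0$.

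For part (ii), the diagonal trick collapses over $\mathbb F_2$ since $\alpha - 1 = 0$ is unavoidable. Instead we exploit the extra coordinate of $\{0\} \times \mathcal C$: write $\tilde G = [\mathbf 0 \mid G]$ for its generator matrix, and let $\sigma$ be the cyclic permutation $0 \mapsto 1 \mapsto \cdots \mapsto h \mapsto 0$ on the first $h+1$ coordinates (identity elsewhere). A case split of $\tilde G_i \cdot \sigma(\tilde G)_j$ according to whether the row indices are hull indices ($\leq h$) or not, combined with $A'(A')^T = I_h$ (now that $-1 = 1$) and $A'(A'')^T = 0$, produces
\begin{align*}
\tilde G\,(\sigma(\tilde G))^T = \begin{bmatrix} B & 0 \\ 0 & A''(A'')^T \end{bmatrix},
\end{align*}
where $B$ is the lower bidiagonal $h \times h$ matrix with $1$'s on the main diagonal and on the first subdiagonal, so $\det B = 1$. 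Both blocks are invertible, so the product has full rank $k$ and Proposition~\ref{prop:dim-hull} concludes.

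The main obstacle is (ii): one needs a permutation that, combined with the zero coordinate, destroys the self-orthogonality of the $I_h$ block of the hull without disturbing the healthy block $A''(A'')^T$. The cyclic shift on $\{0, 1, \ldots, h\}$ is designed precisely to inject the zero coordinate into the hull columns one at a time, producing the bidiagonal inner-product pattern whose determinant equals $1$ in every characteristic; identifying this single combinatorial gadget is the conceptual step that the rest of the proof plugs into the Proposition~\ref{prop:dim-hull} framework.
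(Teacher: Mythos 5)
Your proof is correct and follows essentially the same route as the paper's: normalize via Lemma~\ref{lem:hull}, extract the identities $A'(A')^T=-I_h$, $A'(A'')^T=0$ and the invertibility of $A''(A'')^T$, then use a diagonal scaling by $\alpha\neq 0,1$ for $q>2$ and a cyclic shift on the first $h+1$ coordinates of the extended code for $q=2$, concluding with Proposition~\ref{prop:dim-hull} and transferring back through Lemma~\ref{lem:pi-C1-C2}. The only differences are cosmetic (your bidiagonal block is the transpose of the paper's $I_h+J_h$, reflecting the opposite shift direction).
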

\begin{proof}
Let $h=dim_{\mathbb F_q}(\mathcal C \cap \mathcal C^{\perp})$. If $h=0$, the conclusion holds by choosing $\sigma$ the identity map.

Assume that  $h>0$.
From  Lemma \ref{lem:hull}, there  exists
$\pi \in \mathbf{PAut}(\mathbb F_q^n)$ such that  the code $\pi (\mathcal C)$ has  generator matrix of the form
\begin{align*}
G=\left[
 \begin{matrix}
   I_h & | & A' \\
  0 & | & A''
  \end{matrix}
  \right],
\end{align*}
and $\pi(\mathcal C) \cap (\pi (\mathcal C))^{\perp}=Span\{G(1,:), \cdots, G(h, :)\}$. Thus,
\begin{align*}
G G^{T}=&
\left[
 \begin{matrix}
   I_h  & A' \\
  0   & A''
  \end{matrix}
  \right]
  \left[
 \begin{matrix}
    I_h  & A' \\
  0   & A''
  \end{matrix}
  \right]^{T}\\
  =&\left[
 \begin{matrix}
    I_h +A' A'^{T}  & A' A''^{T} \\
  A'' A'^{T}   & A'' A''^{T}
  \end{matrix}
  \right]\\
  =&\left[
 \begin{matrix}
    0  & 0 \\
  0   & A'' A''^{T}
  \end{matrix}
  \right].
\end{align*}
One gets
\begin{align}\label{eq:A-A}
A' A'^{T}=-I_h, ~~A' A''^{T}=0,~~ A'' A'^{T}=0.
\end{align}
By Proposition \ref{prop:dim-hull}, one obtains
\begin{align}\label{eq:rank-AA}
Rank(A'' A''^{T})=Rank(GG^T)=k-h.
\end{align}

(i) If $q>2$, there exists $\lambda\in \mathbb F_q $ such that $\lambda \not \in \{0, 1\}$.
Let  $\gamma$ be the linear  transformation defined by $\gamma(\mathbf c)= (w_1, \cdots,w_n)$
for any $\mathbf c=(c_1, \cdots, c_n)\in \mathbb F_q^n$, where $w_i=\lambda c_i$ if $1\le i\le h$ and  $w_i= c_i$ if $h+1\le i\le n$.
Then,
\begin{align*}
G (G^{\gamma})^{T}=&
\left[
 \begin{matrix}
   I_h  & A' \\
  0   & A''
  \end{matrix}
  \right]
  \left[
 \begin{matrix}
   \lambda I_h  & A' \\
  0   & A''
  \end{matrix}
  \right]^{T}\\
  =&\left[
 \begin{matrix}
   \lambda I_h +A' A'^{T}  & A' A''^{T} \\
  A'' A'^{T}   & A'' A''^{T}
  \end{matrix}
  \right].
\end{align*}
From Equation (\ref{eq:A-A}), one has
\begin{align*}
G (G^{\gamma})^{T}  =\left[
 \begin{matrix}
   (\lambda-1) I_h   & 0 \\
  0   & A'' A''^{T}
  \end{matrix}
  \right].
\end{align*}
From Equation (\ref{eq:rank-AA}), $Rank(G (G^{\gamma})^{T})=k$. By Proposition \ref{prop:dim-hull},
$\pi(\mathcal C) \cap (\pi(\mathcal C))^{\perp_{\gamma}}=\{0\}$.
Using Equation (\ref{eq:sigma-euclidean-dual}), one has
\begin{align*}
\pi(\mathcal C) \cap (\gamma\pi(\mathcal C))^{\perp}=\{0\}.
\end{align*}
By Lemma \ref{lem:pi-C1-C2}, $\mathcal C \cap (\pi^{-1}\gamma\pi(\mathcal C))^{\perp}=\{0\}$. From Equation (\ref{eq:sigma-euclidean-dual}),
$\mathcal C$ is $\sigma$-LCD, where $\sigma= \pi^{-1}\gamma\pi\in \mathbf{MAut}(\mathbb F_q^n)$.

(ii) Let $q=2$. Then, the code $\{0\}\times \pi (\mathcal C)$ has a  generator matrix of the form
\begin{align*}
G=\left[
 \begin{matrix}
  0& | & I_h & | & A' \\
  0& | &0 & | & A''
  \end{matrix}
  \right].
\end{align*}
Let  $\pi_2$ be the linear  transformation defined by $\pi_2(\mathbf c)= (w_0, w_1, \cdots,w_n)$
for any  $\mathbf c=(c_0, c_1, \cdots, c_n)\in \mathbb F_q^{n+1}$, where $w_h=c_0$,  $w_i= c_{i+1}$ if $0\le i\le h-1$, and  $w_i= c_i$ if $h+1\le i\le n$.
Then  $\pi_2(\mathbf c)= (c_1,c_2, \cdots, c_{h},c_{0}, c_{h+1}, c_{h+2} \cdots,c_n)$ and
\begin{align*}
G (G^{\pi_2})^{T}=&
\left[
 \begin{matrix}
   0 & I_h  & A' \\
  0 & 0   & A''
  \end{matrix}
  \right]
  \left[
 \begin{matrix}
    I_h& 0  & A' \\
  0 & 0  & A''
  \end{matrix}
  \right]^{T}\\
  =&\left[
 \begin{matrix}
   J_h +A' A'^{T}  & A' A''^{T} \\
  A'' A'^{T}   & A'' A''^{T}
  \end{matrix}
  \right],
\end{align*}
where
\begin{align*}
J_h=\left[
 \begin{matrix}
   0 & 1 & 0 & \cdots & 0 & 0 \\
  0 & 0  &1 & \cdots & 0  & 0  \\
  \vdots & \vdots  &\vdots & \cdots & \vdots  & \vdots \\
   0 & 0  &0 & \cdots & 1  & 0  \\
    0 & 0  &0 & \cdots & 0  & 1  \\
     0 & 0  &0 & \cdots & 0  & 0
  \end{matrix}
  \right].
\end{align*}

From Equation (\ref{eq:A-A}), one has
\begin{align*}
G (G^{\pi_2})^{T}  =\left[
 \begin{matrix}
    I_h+J_h   & 0 \\
  0   & A'' A''^{T}
  \end{matrix}
  \right].
\end{align*}
From Equation (\ref{eq:rank-AA}), $Rank(G (G^{\pi_2})^{T})=k$. By Proposition \ref{prop:dim-hull},
$(\{0\} \times \pi(\mathcal C)) \cap (\{0\} \times \pi(\mathcal C))^{\perp_{\pi_2}}=\{0\}$.
Let $\pi_1\in \mathbf{PAut}(\mathbb F_q^{n+1})$ be defined by $\pi_1(c_0,\mathbf c)=(c_0, \pi(\mathbf c))$, where $c_0\in \mathbb F_q$ and  $\mathbf c\in \mathbb F_q^n$.
Thus, $\{0\}\times  \pi (\mathcal C)=  \pi_1(\{0\}\times \mathcal C)$.
Using Equation (\ref{eq:sigma-euclidean-dual}), one has
\begin{align*}
\pi_1(\{0\}\times \mathcal C) \cap (\pi_2\pi_1(\{0\}\times\mathcal C))^{\perp}=\{0\}.
\end{align*}
By Lemma \ref{lem:pi-C1-C2}, $\{0\}\times \mathcal C \cap (\pi_1^{-1}\pi_2\pi_1(\{0\}\times \mathcal C))^{\perp}=\{0\}$. From Equation (\ref{eq:sigma-euclidean-dual}),
$\{0\}\times \mathcal C$ is $\sigma$-LCD, where $\sigma= \pi_1^{-1}\pi_2\pi_1 \in \mathbf{PAut}(\mathbb F_q^{n+1})$.

\end{proof}


\begin{remark}
A vector $(x_1, \cdots,x_n)\in\mathbb F_q^n$ is called \emph{even-like} if $\sum_{i=1}^n x_i=0$. A code is said to be even-like if it has only even-like codewords.
Let $n$ be an even positive integer and $\mathcal C$ be an even-like linear code in $\mathbb F_2^n$ with $(1,1, \cdots, 1)\in \mathcal C$. Since  $(1,1, \cdots, 1) \in \mathcal  C \cap (\mathcal C)^{\perp_{\sigma}}$ for any $\sigma\in \mathbf{PAut}(\mathbb F_2^n)$
, $\mathcal C$
cannot be a $\sigma$-LCD.
\end{remark}

\begin{remark}
Let $\mathcal C_1$ and $\mathcal C_2$ be $q$-ary $[n,k]$ linear codes. Using similar  techniques as in Theorem \ref{thm:sigma-LCP}, we can actually prove the more general conclusions below:

(i) if $q>2$ then there exists $\sigma\in \mathbf{MAut}(\mathbb F_q^n)$ such that $(\mathcal C_1, (\sigma(\mathcal C_2))^\perp)$ is an LCP of codes;

(ii) if $q=2$ then there exists $\sigma\in \mathbf{PAut}(\mathbb F_q^{n+1})$ such that $(\{0\}\times \mathcal C_1, (\sigma(\{0\}\times \mathcal C_2))^\perp)$ is an LCP of codes.
\end{remark}
Consequently,  in order to  construct effectively LCP of codes,  the most important task is to determine the specific $\sigma$ inner product such that $\mathcal C_1$ is a $\sigma$-LCD,
or $(\mathcal C_1, (\sigma(\mathcal C_2))^\perp)$ is LCP of codes, or
$(\{0\}\times \mathcal C_1, (\sigma(\{0\}\times \mathcal C_2))^\perp)$ is LCP of codes. In fact, the proofs of   Lemma \ref{lem:hull} and  Theorem \ref{thm:sigma-LCP} also show us
 how to construct  effectively $\sigma$ such that the code is $\sigma$-LCD. Further, in Section \ref{sec:LCD-GQC} and \ref{sec:Abelian-LCD},
we will focus primarily on describing more explicitly $\sigma$-LCD generalized quasi-cyclic codes and Abelian codes.
We highlight that from Theorem \ref{thm:sigma-LCP}
and the previous remarks,  the following two corollaries hold.

\begin{corollary}\label{cor:q>2}
Let $q>2$. Given two  $q$-ary $[n, k,d_i]$ ($i=1,2$) linear codes,  there exists
an  LCP of codes with parameters  $[n, k, d_1, d_2]$.
\end{corollary}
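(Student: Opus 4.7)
The plan is to bootstrap directly from the generalization of Theorem \ref{thm:sigma-LCP}(i) stated in the preceding remark: for $q>2$, given any two $q$-ary $[n,k]$ linear codes $\mathcal{C}_1$ and $\mathcal{C}_2$, there exists a monomial transformation $\sigma \in \mathbf{MAut}(\mathbb{F}_q^n)$ such that $(\mathcal{C}_1,(\sigma(\mathcal{C}_2))^\perp)$ is an LCP of codes. Feeding the two given codes into this construction produces a candidate LCP, and it then only remains to verify that its parameter string agrees with $[n, k, d_1, d_2]$.

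Parameter verification is immediate. The first component of the LCP is $\mathcal{C}_1$ itself, whose minimum distance is $d_1$ by hypothesis. The second component is $(\sigma(\mathcal{C}_2))^\perp$, and by the LCP parameter convention the relevant distance is that of its Euclidean dual, namely $((\sigma(\mathcal{C}_2))^\perp)^\perp = \sigma(\mathcal{C}_2)$. Since any $\sigma \in \mathbf{MAut}(\mathbb{F}_q^n) \subseteq \mathbf{Aut}(\mathbb{F}_q^n)$ preserves Hamming distance (condition (ii) in the definition of $\mathbf{Aut}(\mathbb{F}_q^n)$) and sends linear codes to linear codes of the same dimension (condition (iii) together with $\mathbb{F}_q$-linearity of monomial maps), the image $\sigma(\mathcal{C}_2)$ is an $[n,k,d_2]$ code. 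Consequently $(\mathcal{C}_1,(\sigma(\mathcal{C}_2))^\perp)$ is an LCP of codes with parameters $[n, k, d_1, d_2]$, which proves the corollary.

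The only nontrivial content is thus carried by the invoked remark, and I expect the bulk of the technical effort to be spent there. That proof adapts the argument of Theorem \ref{thm:sigma-LCP}(i) almost verbatim with $G G^T$ replaced by the cross-product $G_1 G_2^T$ of generator matrices of $\mathcal{C}_1$ and $\mathcal{C}_2$: a single coordinate permutation brings this product into the $2\times 2$ block form used in Lemma \ref{lem:hull}, after which scaling the first $h$ coordinates by some $\lambda \in \mathbb{F}_q \setminus \{0,1\}$ (available precisely because $q>2$) makes the upper-left block $(\lambda - 1) I_h$ invertible and hence $G_1(G_2^\sigma)^T$ of full rank $k$. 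The main obstacle is purely bookkeeping — tracking a simultaneous block decomposition of two generator matrices under a common coordinate permutation — since the algebraic mechanism (diagonal rescaling removes the eigenvalue obstruction coming from the hull) is unchanged from the single-code case treated in Theorem \ref{thm:sigma-LCP}(i).
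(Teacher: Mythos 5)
Your proposal is correct and follows the same route as the paper: the paper's proof of this corollary is exactly the one-line appeal to the remark after Theorem \ref{thm:sigma-LCP}(i), producing the pair $(\mathcal C_1, (\sigma(\mathcal C_2))^\perp)$. Your explicit verification that the second parameter is the distance of $((\sigma(\mathcal C_2))^\perp)^\perp = \sigma(\mathcal C_2)$, preserved under the monomial map $\sigma$, is a detail the paper leaves implicit, and your sketch of how the remark itself is proved matches the paper's indication that it follows the argument of Theorem \ref{thm:sigma-LCP}(i).
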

\begin{proof}
Let $\mathcal C_i$ be an $[n,k, d_i]$ linear codes for $i=1,2$. From the remarks after Theorem \ref{thm:sigma-LCP}, there exists $\sigma\in \mathbf{MAut}(\mathbb F_q^n)$ such that $(\mathcal C_1, (\sigma(\mathcal C_2))^\perp)$ is an LCP of codes with parameters $[n, k, d_1, d_2]$.
\end{proof}

\begin{remark}
We let $\mathcal G_{24}$ be the binary Golay codes with parameters $[24, 12, 8]$. As we all know that $\mathcal G_{24}$ is self-dual and $(1,1, \cdots, 1)\in \mathcal G_{24}$.
Moreover, any linear code $\mathcal C$ with parameters $[24, 12, 8]$ must be  equivalent to $\mathcal G_{24}$. Thus, $(1,1, \cdots, 1)\in \mathcal C$ and $\mathcal C$ is self-dual.
It shows that there is no binary LCP of codes with parameters $[24, 12, 8, 8]$. On the contrary, we assume there exists binary LCP of codes $(\mathcal C_1, C_2)$ with parameters $[24, 12, 8, 8]$. Then, $\mathcal C_1+ \mathcal C_2= \mathbb F_2^{24}$, $\mathcal C_1 \cap \mathcal C_2=\{0\}$ and $\mathcal C_2^{\perp}$ is an $[24,12,8]$ linear code. Thus, $\mathcal C_2^{\perp}=\mathcal C_2$ and $(1,1, \cdots, 1)\in \mathcal C_1 \cap \mathcal C_2$, which contradicts $\mathcal C_1 \cap \mathcal C_2=\{0\}$. The above discussion shows that when $q=2$, the Corollary
\ref{cor:q>2} is not true. For binary LCP of codes, we have the following slightly weaker conclusions.
\end{remark}

\begin{corollary}
Given two binary $[n, k,d_i]$ ($i=1,2$) linear codes, there  exist binary  LCP of codes with parameters $[n+1, k, d_1, d_2]$ and  $[n, k, \ge (d_1-1), \ge (d_2-1)]$.
\end{corollary}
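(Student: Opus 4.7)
The plan is to derive both LCPs from the binary case of the second remark following Theorem \ref{thm:sigma-LCP}. The length-$(n+1)$ LCP is essentially immediate: applying that remark to $(\mathcal C_1,\mathcal C_2)$ furnishes $\sigma\in\mathbf{PAut}(\mathbb F_2^{n+1})$ such that $\mathcal D_1:=\{0\}\times\mathcal C_1$ and $\mathcal D_2:=(\sigma(\{0\}\times\mathcal C_2))^{\perp}$ satisfy $\mathcal D_1\oplus\mathcal D_2=\mathbb F_2^{n+1}$. Since prepending a zero preserves the weight distribution, $\mathcal D_1$ has minimum distance $d_1$, and since $\sigma$ is a coordinate permutation, $\mathcal D_2^{\perp}=\sigma(\{0\}\times\mathcal C_2)$ has minimum distance $d_2$, giving parameters $[n+1,k,d_1,d_2]$.

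For the length-$n$ LCP, I would puncture this pair at a single carefully chosen coordinate. The key observation is that $\mathbf e_1=(1,0,\dots,0)\in(\{0\}\times\mathcal C_2)^{\perp}$, so $\sigma(\mathbf e_1)=\mathbf e_{\sigma(1)}$ lies in $\sigma((\{0\}\times\mathcal C_2)^{\perp})=\mathcal D_2$; the direct-sum property then forces $\mathbf e_{\sigma(1)}\notin\mathcal D_1$. Puncturing $\mathcal D_1$ and $\mathcal D_2$ at position $\sigma(1)$ therefore yields codes $\mathcal D_1^{\ast},\mathcal D_2^{\ast}$ of length $n$ with $\dim\mathcal D_1^{\ast}=k$ and $\dim\mathcal D_2^{\ast}=n-k$, by the standard rule that puncturing drops the dimension by one exactly when the target standard basis vector lies in the code. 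Because puncturing is a linear surjection $\mathbb F_2^{n+1}\to\mathbb F_2^n$, one has $\mathcal D_1^{\ast}+\mathcal D_2^{\ast}=\mathbb F_2^n$, and the dimension count $k+(n-k)=n$ then forces $(\mathcal D_1^{\ast},\mathcal D_2^{\ast})$ to be an LCP.

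It remains to verify the distance bounds. Puncturing lowers the minimum distance by at most one, so $\mathcal D_1^{\ast}$ has distance $\ge d_1-1$. For the dual side, I would invoke the standard puncture/shorten duality: $(\mathcal D_2^{\ast})^{\perp}$ equals the shortening of $\mathcal D_2^{\perp}=\sigma(\{0\}\times\mathcal C_2)$ at position $\sigma(1)$. The clean-up point is that every codeword of $\sigma(\{0\}\times\mathcal C_2)$ already has a $0$ at position $\sigma(1)$ (its preimage under $\sigma$ lies in $\{0\}\times\mathcal C_2$, whose first coordinate is $0$), so the shortening merely deletes an always-zero coordinate and the minimum distance is preserved at $d_2\ge d_2-1$. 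The main obstacle I anticipate is stating the puncture/shorten duality and the vanishing of position $\sigma(1)$ in $\sigma(\{0\}\times\mathcal C_2)$ cleanly; once these are in place, the remaining bookkeeping on dimensions and weights is routine.
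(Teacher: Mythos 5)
Your proof is correct, and while the first half coincides with the paper's argument, the second half takes a genuinely different route. For the $[n+1,k,d_1,d_2]$ pair you do exactly what the paper does: invoke the binary case of the remark following Theorem \ref{thm:sigma-LCP} for $\{0\}\times\mathcal C_1$ and $\{0\}\times\mathcal C_2$ and read off the distances. For the length-$n$ pair the paper proceeds in the opposite order: it first punctures the two \emph{input} codes at a common coordinate to get $[n-1,k,\ge d_i-1]$ codes and then runs the same extension construction, landing at length $(n-1)+1=n$; you instead build the length-$(n+1)$ LCP first and puncture \emph{both members of the pair} at the coordinate $\sigma(1)$, using $\mathbf e_{\sigma(1)}\in\mathcal D_2\setminus\mathcal D_1$ to control both dimensions and the puncture/shorten duality to identify $(\mathcal D_2^{\ast})^{\perp}$ with a shortening of $\sigma(\{0\}\times\mathcal C_2)$. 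Your version buys three things: the puncturing position is dictated by the structure of the pair, whereas the paper tacitly needs a single coordinate $j$ with $\mathbf e_j\notin\mathcal C_1\cup\mathcal C_2$ (which deserves a word of justification when $2k\ge n$); you obtain the sharper parameters $[n,k,\ge d_1-1,\,d_2]$, since the shortening only deletes an identically zero coordinate of $\sigma(\{0\}\times\mathcal C_2)$; and your argument is uniform in $k$, so the separate treatment of $k=n$ in the paper is unnecessary. The price is having to quote the duality between puncturing and shortening, which the paper's ordering avoids. All the individual steps you flag as needing care (the dimension rule for puncturing over $\mathbb F_2$, surjectivity of the puncturing map on $\mathcal D_1+\mathcal D_2=\mathbb F_2^{n+1}$, and $\sigma((\{0\}\times\mathcal C_2)^{\perp})=(\sigma(\{0\}\times\mathcal C_2))^{\perp}$ for a coordinate permutation) are indeed correct, so the proof closes.
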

\begin{proof}
If $n=k$, The conclusion is clear. Next, we assume $k<n$.
Let $\mathcal C_i$ be a binary $[n,k, d_i]$ linear codes for $i=1,2$. From the remarks after Theorem \ref{thm:sigma-LCP}, there exists $\sigma\in \mathbf{PAut}(\mathbb F_q^{n+1})$ such that $(\{0\}\times \mathcal C_1, (\sigma(\{0\}\times \mathcal C_2))^\perp)$ is an LCP of codes with parameters $[n+1, k, d_1, d_2]$.

If $k<n$ and there exist $[n, k,d_i]$  linear codes $\mathcal C_i$ ($i=1,2$), there must exist $[n-1, k,\ge (d_i-1])$  linear codes $\mathcal C_i^*$ ($i=1,2$), which are obtained by deleting some same coordinate  in each codeword of $\mathcal C_i$. Using the previous discussion,  there  exists a binary  LCP of code with parameters   $[n, k, \ge (d_1-1), \ge (d_2-1)]$.
It completes the proof.
\end{proof}

\section{$\sigma$-LCD  generalized quasi-cyclic codes}\label{sec:LCD-GQC}
In \cite{GOOSSS17}, G\"{u}neri et al. presented a criterion for an Euclidean GQC code to be LCD code and analyzed the asymptotic performance of Euclidean LCD GQC codes.
In this section, we shall study $\sigma$-LCD GQC codes and present results on those codes. More specifically, we present necessary and sufficient conditions for
 GQC codes and $1$-generator GQC codes to be $\sigma$-LCD codes
and derive asymptotically good $\sigma$-LCD GQC codes.

\subsection{$\sigma$-LCD GQC codes}
Let $m_1,m_2,\cdots,m_l$ and $n$ be positive integers with $n=m_1+m_2+\cdots + m_{l}$. Let $R_i = \mathbb F_q[x]/(x^{m_i}-1)$ and $\mathbf{gcd}(m_i,q) = 1$, where $1 \le i \le l$.
The cartesian product $M = R_1 \times R_2 \times \cdots \times R_l$ is an $\mathbb F_q[x]$-module under the component wise addition and the scalar multiplication.
If $\mathcal C$ is an $\mathbb F_q[x]$-submodule, then  $\mathcal C$ is called a \emph{generalized  quasi-cyclic} (\emph{GQC}) code of block lengths $(m_1,m_2,\cdots,m_l)$,
which is a $q$-ary linear code of length $n$.

If $m_1= \cdots = m_{l}= m$, then we obtain a \emph{quasi-cyclic} (\emph{QC}) code of length $ml$ and index $l$. Furthermore, if $l = 1$ then $\mathcal C$ is a cyclic code of length $m$.

Let $m$ be  the least common multiple of $m_1, m_2, \cdots, m_{l}$ and  $\xi$ be a primitive $m$-th root of unity in the algebraic closure of $\mathbb F_q$.
The \emph{$q$-cyclotomic coset} $\mathbb C_{i}$ modulo $m$ containing $i$ is defined by
\begin{align*}
\mathbb C_{i} = \{iq^0,iq^1,iq^2,\cdots ,iq^{l_i-1}\} \pmod m,
\end{align*}
where $l_i$ is the least positive integer such that $i q^{l_i} \equiv i \pmod m$. The smallest non-negative integer in $\mathbb C_i$ is called the \emph{coset leader} of $\mathbb C_i$.
Let $\Gamma(q,m)$ denotes the set of all coset leaders of the $q$-cyclotomic cosets modulo $m$. Then $\{\mathbb C_i : i \in \Gamma(q,m)\}$ forms a partition of the set
$\mathbb Z_m = \{0,1,\cdots ,m-1\}$. Clearly, the minimal polynomial $\mathbb M_{\xi^i} (x)$ over $\mathbb F_q$ of $\xi^i$ is given by
\begin{align}\label{eq:M-xi-i}
\mathbb M_{\xi^i}(x)= \prod_{j\in \mathbb C_i} (x-\xi^j).
\end{align}

Set $\Gamma_0:=\{i\in \Gamma(q,m) : -i \in \mathbb C_i\}$. Then, there exists a set $\Gamma_1\subseteq \Gamma(q,m)$ such that
$(\cup_{i\in \Gamma_1} \mathbb C_i)\cap (\cup_{i\in \Gamma_1}\mathbb C_{-i})=\emptyset$ and $(\cup_{i\in \Gamma_0} \mathbb C_i)\cup (\cup_{i\in \Gamma_1} \mathbb C_i)\cup (\cup_{i\in \Gamma_1}\mathbb C_{-i})=\mathbb Z_m$. Let $\Gamma_{0,+}:=\{+1,-1\}$ when $q$ is odd and $m$ is even, and $\Gamma_{0,+}:=\{+1\}$ otherwise. Set
$\Gamma_{0,-}:=\Gamma_{0}\setminus \Gamma_{0,+}$.
Clearly,
\begin{align*}
x^m-1= \prod_{i\in \Gamma_{0,+}} \mathbb M_{\xi^i}(x)  \prod_{i\in \Gamma_{0,-}} \mathbb M_{\xi^i}(x)  \prod_{i\in \Gamma_1} \mathbb M_{\xi^i}(x) \mathbb M_{\xi^{-i}}(x).
\end{align*}
Therefore, for every $j\in\mathbb C_i$ one has:
\begin{align*}
x^{m_j}-1=\prod_{i\in \Gamma_{0,+}} \mathbb M_{\xi^i}^{\delta_{j,i}}(x)  \prod_{i\in \Gamma_{0,-}} \mathbb M_{\xi^i}^{\delta_{j,i}}(x)   \prod_{i\in \Gamma_1} \left( \mathbb M_{\xi^i}(x) \mathbb M_{\xi^{-i}}(x) \right )^{\delta_{j,i}},
\end{align*}
where $\delta_{j,i}=1$ if $\xi^{im_j}=1$ and $\delta_{j,i}=0$ otherwise.

For $i \in \Gamma(q,m)$, we now define
 \begin{align*}
 \mathbb V_i:= \{(y_1,y_2, \cdots, y_{l})\in \mathbb F_q[\xi^i]^l: y_j=0 \text{ if } \xi^{im_j}\neq 1\}.
 \end{align*}
By definition,  $\mathbb V_i=\{0\}$ if and only if  $\mathbb V_{ai}=\{0\}$, where  $\mathbf{gcd}(a,m)=1$. Moreover, $\mathbb V_{ai}=\mathbb V_{i}$ when $\mathbf{gcd}(a,m)=1$.

Let $\mathcal C$ be a GQC code. For $i\in \mathbb Z_m$,  we define the  \emph{constituent} $C_i$ as follows.
\begin{align}\label{eq:constituent}
C_i=\{(c_1(\xi^i) \delta_{1,i}, c_2(\xi^i) \delta_{2,i}, \cdots, c_l(\xi^i) \delta_{l,i}): (c_1(x), c_2(x), \cdots, c_l(x))\in \mathcal C\},
\end{align}
where the symbol $\delta_{j,i}$ is defined as above. For $i\in \mathbb Z_m$,  $C_i$ is clearly  an $\mathbb F_q[\xi^i]$ linear subspace of $\mathbb V_i$.
Moreover, for any integer $k$,
\begin{align}\label{eq:C-iq}
C_{iq^k}=C_{i}^{q^k}:= \{(\alpha_1^{q^k}, \cdots, \alpha_l^{q^k}): (\alpha_1, \cdots, \alpha_l)\in \mathcal C_i\}.
\end{align}

Given an integer $a$ and a subset $A$ of $\mathbb Z_m$, the set $\{a x\pmod{m} : x\in A\}$ is denoted by $aA$. From \cite{GOOSSS17}, a GQC code $\mathcal C$ can be viewed as
\begin{align}\label{eq:decomp}
\mathcal C\cong \left(\prod_{i\in \Gamma_{0,+}} C_i \right)\times \left( \prod_{i\in \Gamma_{0,-}} C_i\right)\times   \left (\prod_{i\in \Gamma_1\cup (-\Gamma_1)} C_i  \right).
\end{align}
It is well-known, that the dual of a GQC code is also a GQC code. Let $\mathcal  C$ be a GQC code with  decomposition given by Equation (\ref{eq:decomp}). Then its dual code $\mathcal C^{\perp}$ is of the form
\begin{align}\label{eq:C-dual}
\mathcal C^{\perp}\cong \left(\prod_{i\in \Gamma_{0,+}} C_i^{\perp'} \right)\times \left( \prod_{i\in \Gamma_{0,-}} C_i^{\perp_{h}'}\right)\times   \left (\prod_{i\in \Gamma_1\cup (-\Gamma_1)} C_{-i}^{\perp'}  \right),
\end{align}
where $C_i^{\perp'}$ is the  \emph{Euclidean $\mathbb V_i$-dual} of $C_i$  defined by
\begin{align*}
C_i^{\perp'}=\{(w_1,\cdots, w_{l})\in \mathbb V_i: \sum_{j=1}^lc_jw_j=0 \text{ for any } (c_1,\cdots, c_l)\in C_i\},
\end{align*}
and
$C_i^{\perp'_h}$ is the  \emph{Hermitian $\mathbb V_i$-dual} of $C_i$  defined by
\begin{align*}
C_i^{\perp'_h}=\{(w_1,\cdots, w_{l})\in \mathbb V_i: \sum_{j=1}^lc_jw_j^{q^{deg(\mathbb M_{\xi^i}(x))/2}}=0 \text{ for any } (c_1,\cdots, c_l)\in C_i\}.
\end{align*}
Note that if $i \in \Gamma_{0,-}$, then $(\xi^{i})^{-1}=(\xi^{i})^{ deg(\mathbb M_{\xi^i}(x))/2}$. Consequently,
\begin{align*}
C_i^{\perp'_h}=C_{-i}^{\perp'}.
\end{align*}

Let $a$ be an integer with $\mathbf{gcd}(a,m)=1$. Let $\mu_a$ be the transformation on $M = R_1 \times R_2 \times \cdots \times R_l$ defined by
\begin{align*}\mu_a(c_1(x), c_2(x), \cdots, c_l(x))=(w_1(x),w_2(x), \cdots, w_l(x)),\end{align*}
where $(c_1(x), c_2(x), \cdots, c_l(x))\in M$ and $w_j(x)=c_j(x^a) \pmod{ (x^{m_j}-1)}$ for $j\in \{1,2,\cdots, l\}$.
Then $\mu_a$ is a coordinate permutation over $M$. Let $\mathcal C'=\mu_a(\mathcal C)$ and $C_i'$ ($i\in \mathbb Z_m$) be the constituents of $\mathcal C'$.
Then,
\begin{align*}
C_i'&=\{(c_1(\xi^{ai}) \delta_{1,i},  \cdots, c_l(\xi^{ai}) \delta_{l,i}): (c_1(x), \cdots, c_l(x))\in \mathcal C\}\\
&=\{(c_1(\xi^{ai}) \delta_{1,ai},  \cdots, c_l(\xi^{ai}) \delta_{l,ai}): (c_1(x), \cdots, c_l(x))\in \mathcal C\}=C_{ai}.\\
\end{align*}
Note that $ai\equiv i \pmod{m}$ if $i\in \Gamma_{0,+}$. Thus, $\mu_a(\mathcal C)$
can be decomposed as
\begin{align}\label{eq:mu-C}
\mu_a(\mathcal C)\cong \left(\prod_{i\in \Gamma_{0,+}} C_{i} \right)\times \left( \prod_{i\in \Gamma_{0,-}} C_{ai}\right)\times   \left (\prod_{i\in \Gamma_1\cup (-\Gamma_1)} C_{ai}  \right).
\end{align}
One observes that $a\Gamma_{0,+}=\Gamma_{0,+}$, $a\Gamma_{0,-}=\Gamma_{0,-}$, and $a(\Gamma_{1}\cup (- \Gamma_{1}))=\Gamma_{1}\cup (- \Gamma_{1})$. By Equation (\ref{eq:C-dual}), one has
\begin{align}\label{eq:mu-C-dual}
(\mu_a(\mathcal C))^{\perp}\cong \left(\prod_{i\in \Gamma_{0,+}} C_{i}^{\perp'} \right)\times \left( \prod_{i\in \Gamma_{0,-}} C_{ai}^{\perp_{h}'}\right)\times   \left (\prod_{i\in \Gamma_1\cup (-\Gamma_1)} C_{-ai}^{\perp'}  \right).
\end{align}
From Equation (\ref{eq:-i-Frob}) and $-ai\equiv i \pmod{m}$ if $i \in \Gamma_{0,+}$, one gets
\begin{align}\label{eq:mu-C-dual-simple}
(\mu_a(\mathcal C))^{\perp}\cong \prod_{i\in \Gamma(q,m)} C_{-ai}^{\perp'}.
\end{align}

Let us now characterize $\mu_a$ self-orthogonal, $\mu_a$ self-dual, and $\mu_a$-LCD GQC codes via their constituents. Note that the Euclidean case (that is, when $a=1$) has been already considered in \cite{GZS16} and \cite{GOOSSS17}.

\begin{theorem}\label{thm:mu-a}
Let $a$ be an integer with $\mathbf{gcd}(a,m)=1$ and $\mathcal C$ be a $q$-ary GQC code of block lengths $(m_1, \cdots,  m_{l})$, whose  decomposition is as in Equation (\ref{eq:decomp}). Then

(i) $\mathcal C$ is $\mu_a$  self-orthogonal (resp.  $\mu_a$ self-dual) if and only if $C_i\subseteq C_{i}^{\perp '}$ (resp. $C_i= C_{i}^{\perp '}$) for $i\in \Gamma_{0,+}$,
$C_i\subseteq C_{ai}^{\perp_h '}$ (resp. $C_i= C_{ai}^{\perp_h '}$) for $i\in \Gamma_{0,-}$, and
$C_i\subseteq C_{-ai}^{\perp '}$ (resp. $C_i= C_{-ai}^{\perp '}$) for $i\in \Gamma_1\cup(-\Gamma_1)$;

(ii) $\mathcal C$ is $\mu_a$-LCD if and only if $C_i \cap  C_{i}^{\perp '}=\emptyset$ for $i\in \Gamma_{0,+}$,
$C_i\cap C_{ai}^{\perp_h '}=\emptyset$ for $i\in \Gamma_{0,-}$, and
$C_i\cap C_{-ai}^{\perp '}=\emptyset$ for $i\in \Gamma_1\cup(-\Gamma_1)$.
\end{theorem}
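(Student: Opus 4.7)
The plan is to reduce the three assertions to component-wise statements by invoking the constituent decomposition. First I would identify $\mathcal{C}^{\perp_{\mu_a}}$ with the Euclidean dual of $\mu_a(\mathcal{C})$ via Equation~(\ref{eq:sigma-euclidean-dual}), so that the $\mu_a$-dual can be read off directly from Equation~(\ref{eq:mu-C-dual}). At this point $\mathcal{C}$ and $\mathcal{C}^{\perp_{\mu_a}}$ are both expressed as products of subspaces indexed by the same disjoint union $\Gamma_{0,+}\sqcup\Gamma_{0,-}\sqcup\bigl(\Gamma_1\cup(-\Gamma_1)\bigr)$, with the $i$-th factor of $\mathcal{C}$ equal to $C_i$ and the $i$-th factor of $\mathcal{C}^{\perp_{\mu_a}}$ equal to $C_i^{\perp'}$, $C_{ai}^{\perp'_h}$, or $C_{-ai}^{\perp'}$, respectively, for the three ranges of $i$.

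Next I would record that each factor of $\mathcal{C}^{\perp_{\mu_a}}$ lives in the same ambient space $\mathbb{V}_i$ as the corresponding factor of $\mathcal{C}$: for $i\in \Gamma_{0,+}$ this is immediate, while for $i\in\Gamma_{0,-}\cup\Gamma_1\cup(-\Gamma_1)$ it follows from the identity $\mathbb{V}_{ai}=\mathbb{V}_i$ (recalled just before Equation~(\ref{eq:constituent})) together with $\mathbb{V}_{-ai}=\mathbb{V}_i$, using $\gcd(a,m)=1$. With this matching in place, the CRT-type decomposition of $M$ shows that the product decompositions of $\mathcal{C}$ and $\mathcal{C}^{\perp_{\mu_a}}$ are compatible direct-sum decompositions inside $M$.

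The core observation is then that for two subspaces $U=\prod_i U_i$ and $V=\prod_i V_i$ sitting inside a direct sum $\bigoplus_i W_i$ with $U_i,V_i\subseteq W_i$, one has $U\subseteq V \iff U_i\subseteq V_i$ for every $i$, $U=V \iff U_i=V_i$ for every $i$, and $U\cap V=\{0\}\iff U_i\cap V_i=\{0\}$ for every $i$. Applying this trivially with $U=\mathcal{C}$ and $V=\mathcal{C}^{\perp_{\mu_a}}$ immediately yields (i) for $\mu_a$ self-orthogonality and self-duality, and (ii) for the $\mu_a$-LCD property, since the three families of intersection/inclusion conditions in the statement are exactly the component-wise conditions furnished by Equations~(\ref{eq:decomp}) and~(\ref{eq:mu-C-dual}).

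The only nontrivial point I expect is justifying the compatibility of the decompositions with the intersection in $M$; in particular, one must check that the product-of-constituents description really is an \emph{internal} direct sum so that component-wise intersections match the intersection of the whole codes. Once the ambient space $M$ is decomposed via the CRT isomorphism $\mathbb{F}_q[x]/(x^{m_j}-1)\cong\prod_i \mathbb{F}_q[\xi^i]\delta_{j,i}$ and these factors are collected across $j$, the matching is mechanical; after that, the theorem follows without further calculation.
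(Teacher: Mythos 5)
Your proposal takes essentially the same route as the paper: the paper's entire proof is the observation that $\mathcal C^{\perp_{\mu_a}}=(\mu_a(\mathcal C))^{\perp}$ followed by a direct comparison of the constituent decompositions in Equations~(\ref{eq:decomp}) and~(\ref{eq:mu-C-dual}). Your additional care in checking that the factors of $\mathcal C$ and $\mathcal C^{\perp_{\mu_a}}$ live in the same ambient spaces $\mathbb V_i$ (so that inclusion, equality, and trivial intersection can be tested component-wise) is exactly the detail the paper leaves implicit, so the argument is correct and matches the paper's.
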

\begin{proof}
 Note that $\mathcal C^{\perp_{\mu_a}}= (\mu_a(\mathcal C))^{\perp}$. Then, the conclusion follows from Equations (\ref{eq:decomp}) and
 (\ref{eq:mu-C-dual}).
\end{proof}

\begin{corollary}\label{cor:mu-a}
Let $a$ be an integer with $\mathbf{gcd}(a,m)=1$ and $\mathcal C$ be a $q$-ary GQC code of block lengths $(m_1, \cdots,  m_{l})$, whose  decomposition is given by Equation (\ref{eq:decomp}).  Then

(i) $\mathcal C$ is $\mu_a$  self-orthogonal (resp.  $\mu_a$ self-dual) if and only if
$C_i\subseteq C_{-ai}^{\perp '}$ (resp. $C_i= C_{-ai}^{\perp '}$) for $i\in \Gamma(q,m)$;

(ii) $\mathcal C$ is $\mu_a$-LCD if and only if
$C_i\cap C_{-ai}^{\perp '}=\emptyset$ for $i\in \Gamma(q,m)$.
\end{corollary}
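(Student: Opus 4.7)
The plan is to derive the corollary from Theorem \ref{thm:mu-a}, or more directly from the decomposition (\ref{eq:decomp}) of $\mathcal C$ paired with the simplified expression (\ref{eq:mu-C-dual-simple}) of $(\mu_a(\mathcal C))^{\perp}$, both reindexed over the coset-leader set $\Gamma(q,m)$. The core idea is that once $\mathcal C$ and $\mathcal C^{\perp_{\mu_a}}$ are written as products of constituents indexed by the same set $\Gamma(q,m)$, each of the relations $\subseteq$, $=$, and $\cap = \{0\}$ splits componentwise into the uniform condition stated in the corollary, so (i) and (ii) are obtained simultaneously.

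First I would invoke (\ref{eq:sigma-euclidean-dual}) to identify $\mathcal C^{\perp_{\mu_a}}$ with $(\mu_a(\mathcal C))^{\perp}$, which by (\ref{eq:mu-C-dual-simple}) is already presented as $\prod_{i \in \Gamma(q,m)} C_{-ai}^{\perp'}$. Next, using the Frobenius identity $C_{jq^k} = C_j^{q^k}$ of (\ref{eq:C-iq}), I would rewrite the third factor $\prod_{i \in \Gamma_1 \cup (-\Gamma_1)} C_i$ of (\ref{eq:decomp}) by replacing each non-coset-leader index lying in $-\Gamma_1$ by its coset leader in $\Gamma(q,m)$. This yields the cleaner form $\mathcal C \cong \prod_{i \in \Gamma(q,m)} C_i$.

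With both decompositions indexed by the same set $\Gamma(q,m)$ and occurring inside the same ambient product space $\prod_{i \in \Gamma(q,m)} \mathbb V_i$, a containment between $\mathcal C$ and $\mathcal C^{\perp_{\mu_a}}$ in either direction, an equality of the two, or a trivial intersection is equivalent to the corresponding relation holding componentwise between $C_i$ and $C_{-ai}^{\perp'}$ for every $i \in \Gamma(q,m)$. This is exactly what the corollary asserts in parts (i) and (ii).

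The main delicacy is the consistency of the re-indexing: if the factor $C_j$ in (\ref{eq:decomp}) is replaced by $C_{j'}^{q^k}$ for $j \equiv j'q^k \pmod m$ with $j'$ the coset leader, then the matching dual factor $C_{-aj}^{\perp'}$ must simultaneously be replaced so that the pairing $i \leftrightarrow -ai$ is preserved. From (\ref{eq:C-iq}) and the definition of the Euclidean $\mathbb V_i$-dual one checks that $(C_{jq^k})^{\perp'} = (C_j^{\perp'})^{q^k}$, and applying this on both the primal and dual sides shows that Frobenius intertwines correctly with the map $i \mapsto -ai$. Consequently the componentwise comparison is unaffected by the choice of coset representative, and the three-case analysis present in Theorem \ref{thm:mu-a} collapses to the single uniform statement given here.
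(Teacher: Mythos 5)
Your proposal is correct and follows essentially the same route as the paper: the paper derives the corollary from Theorem \ref{thm:mu-a} by noting that $C_{-ai}=C_i$ for $i\in\Gamma_{0,+}$ and $C_{-ai}^{\perp'}=C_{ai}^{\perp_h'}$ for $i\in\Gamma_{0,-}$, which amounts to exactly the componentwise comparison of the decomposition (\ref{eq:decomp}) against the unified dual decomposition (\ref{eq:mu-C-dual-simple}) that you carry out directly. The one point you treat more explicitly than the paper --- the reindexing from $\Gamma_1\cup(-\Gamma_1)$ to coset leaders via $C_{iq^k}^{\perp'}=(C_i^{\perp'})^{q^k}$ and its compatibility with $i\mapsto -ai$ --- is precisely the content of the paper's Equation (\ref{eq:i-iq}), so nothing essential differs.
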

\begin{proof}
Note that $C_{-ai}= C_{i}$ if $i\in \Gamma_{0,+}$, and $C_{-ai}^{\perp'}= C_{ai}^{\perp_{h}'}$ if $ i\in \Gamma_{0,-}$. Then, the corollary follows from Theorem \ref{thm:mu-a}.
\end{proof}
From Equation (\ref{eq:C-iq}), one gets
\begin{align*}
C_{iq^k}^{\perp'}=(C_{i}^{\perp'})^{q^k},
\end{align*}
where $k$ is a nonnegative integer.
Then,
\begin{align}\label{eq:i-iq}
C_{iq^k}\cap C_{-aiq^k}^{\perp '}=(C_i\cap C_{-ai}^{\perp '})^{q^k}.
\end{align}

From Equation (\ref{eq:i-iq}) and  Corollary \ref{cor:mu-a}, we deduce the following characterizations of $\mu_a$ self-orthogonal, $\mu_a$ self-dual and $\mu_a$-LCD, respectively.

\begin{corollary}\label{cor:mu-a-iq}
Let $a$ be an integer with $\mathbf{gcd}(a,m)=1$ and $\mathcal C$ be a $q$-ary GQC code of block lengths $(m_1, \cdots,  m_{l})$, whose  decomposition is as in Equation (\ref{eq:decomp}). Then

(i) $\mathcal C$ is $\mu_a$  self-orthogonal (resp.  $\mu_a$ self-dual) if and only if
$C_i\subseteq C_{-ai}^{\perp '}$ (resp. $C_i= C_{-ai}^{\perp '}$)  for $i\in \mathbb Z_{m}$;

(ii) $\mathcal C$ is $\mu_a$-LCD if and only if
$C_i\cap C_{-ai}^{\perp '}=\emptyset$  for $i\in \mathbb Z_m$.
\end{corollary}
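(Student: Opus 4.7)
The plan is to deduce the corollary directly from Corollary \ref{cor:mu-a} by noting that every $i\in\mathbb Z_m$ lies in some $q$-cyclotomic coset whose leader belongs to $\Gamma(q,m)$, and then showing that the relevant conditions are invariant under the Frobenius action on the constituents.

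First I would observe that for any $i\in\mathbb Z_m$ there exist $j\in\Gamma(q,m)$ and a nonnegative integer $k$ with $i\equiv jq^k\pmod m$. By Equation (\ref{eq:i-iq}), one has
\begin{align*}
C_i\cap C_{-ai}^{\perp'}=\left(C_j\cap C_{-aj}^{\perp'}\right)^{q^k}.
\end{align*}
Since the Frobenius map $x\mapsto x^{q^k}$ is a bijection on $\overline{\mathbb F_q}$ (and hence on each $\mathbb V_i$), it sends $\{0\}$ to $\{0\}$ and only to $\{0\}$. Therefore $C_i\cap C_{-ai}^{\perp'}=\{0\}$ if and only if $C_j\cap C_{-aj}^{\perp'}=\{0\}$. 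Applying Corollary \ref{cor:mu-a}(ii), the $\mu_a$-LCD condition holds for all $i\in\Gamma(q,m)$ if and only if it holds for all $i\in\mathbb Z_m$. This establishes part (ii).

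For part (i), I would handle inclusions and equalities in the same way. From $C_{iq^k}=C_i^{q^k}$ (Equation (\ref{eq:C-iq})) and the analogous identity $C_{-aiq^k}^{\perp'}=(C_{-ai}^{\perp'})^{q^k}$, the bijectivity of the Frobenius on $\mathbb V_i$ and its compatibility with $\mathbb F_q$-linear structure give that $C_i\subseteq C_{-ai}^{\perp'}$ (resp.\ $C_i=C_{-ai}^{\perp'}$) is equivalent to $C_{iq^k}\subseteq C_{-aiq^k}^{\perp'}$ (resp.\ $C_{iq^k}=C_{-aiq^k}^{\perp'}$). Thus the inclusion/equality required by Corollary \ref{cor:mu-a}(i) for coset leaders is equivalent to the same inclusion/equality for every element of $\mathbb Z_m$.

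The only potential subtlety, and what I would take most care to justify, is the claim that raising to $q^k$ is a bijection on $\mathbb V_i$ that sends $C_j$ to $C_{jq^k}$ and $C_{-aj}^{\perp'}$ to $C_{-ajq^k}^{\perp'}$; these both follow from the fact that the $q^k$-power is an $\mathbb F_q$-algebra automorphism on the relevant extension fields $\mathbb F_q[\xi^i]$, combined with the definitions (\ref{eq:constituent}) of the constituents and of the $\mathbb V_i$-dual. Once this is recorded, the corollary is a direct restatement of Corollary \ref{cor:mu-a} across the full orbits rather than merely their representatives.
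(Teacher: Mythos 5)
Your proposal is correct and follows essentially the same route as the paper, which derives the corollary directly from Equation (\ref{eq:i-iq}) together with Corollary \ref{cor:mu-a}; you simply make explicit the (true and easy) fact that the Frobenius power map is a bijection compatible with the constituents and their $\mathbb V_i$-duals, so the conditions propagate from coset leaders to entire cyclotomic cosets. No gaps.
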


By choosing $a=-1$ in Corollary \ref{cor:mu-a}, one has

\begin{corollary}\label{cor:mu--1}
Let  $\mathcal C$ be a $q$-ary GQC code of block lengths $(m_1, \cdots,  m_{l})$, whose  decomposition is as in Equation (\ref{eq:decomp}). Then

(i) $\mathcal C$ is $\mu_{-1}$  self-orthogonal (resp.  $\mu_{-1}$ self-dual) if and only if
$C_i\subseteq C_{i}^{\perp '}$ (resp. $C_i= C_{i}^{\perp '}$)  for $i\in \Gamma(q,m)$;

(ii) $\mathcal C$ is $\mu_{-1}$-LCD if and only if
$C_i\cap C_{i}^{\perp '}=\emptyset$ for $i\in \Gamma(q,m)$.
\end{corollary}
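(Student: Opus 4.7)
The plan is to obtain Corollary \ref{cor:mu--1} as the immediate specialization of Corollary \ref{cor:mu-a} to the choice $a=-1$. Since $\mathbf{gcd}(-1,m)=1$, the hypothesis on $a$ in Corollary \ref{cor:mu-a} is automatically satisfied, so all of its conclusions apply verbatim for $a=-1$.

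The key observation is the trivial identity $-ai = -(-1)\,i = i$ (in $\mathbb Z_m$) for every $i\in\Gamma(q,m)$. Consequently, the constituent appearing on the right-hand side of each condition in Corollary \ref{cor:mu-a} collapses:
\begin{align*}
C_{-ai}^{\perp'} = C_i^{\perp'}, \qquad i \in \Gamma(q,m).
\end{align*}
Substituting this equality into Corollary \ref{cor:mu-a}(i) converts the $\mu_{-1}$ self-orthogonality condition $C_i\subseteq C_{-ai}^{\perp'}$ into $C_i\subseteq C_i^{\perp'}$, and the $\mu_{-1}$ self-duality condition $C_i=C_{-ai}^{\perp'}$ into $C_i=C_i^{\perp'}$; substituting it into Corollary \ref{cor:mu-a}(ii) turns the $\mu_{-1}$-LCD condition $C_i\cap C_{-ai}^{\perp'}=\{0\}$ into $C_i\cap C_i^{\perp'}=\{0\}$. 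These are precisely the three equivalences claimed in Corollary \ref{cor:mu--1}.

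Since this is a pure specialization of a result already proven, there is no real obstacle to overcome; the only thing to record is the observation $-ai\equiv i\pmod m$ when $a=-1$. The proof can thus be written in a single sentence invoking Corollary \ref{cor:mu-a} with $a=-1$.
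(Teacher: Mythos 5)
Your proposal is correct and is exactly the paper's argument: the paper proves Corollary \ref{cor:mu--1} simply by setting $a=-1$ in Corollary \ref{cor:mu-a}, using the same observation that $-ai\equiv i\pmod m$. (Incidentally, your writing $C_i\cap C_i^{\perp'}=\{0\}$ rather than $=\emptyset$ silently fixes a small typo in the paper's statement.)
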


The following result characterizes $\mu_a$-LCD GQC codes.

\begin{theorem}\label{thm:mu-a-0-Vi}
Let $a$ be an integer with $\mathbf{gcd}(a,m)=1$ and $\mathcal C$ be a $q$-ary GQC code of block lengths $(m_1, \cdots,  m_{l})$ with constituent $C_i=\{0\} \text{ or } \mathbb V_i$ for $i\in \Gamma(q,m)$. Let $T$ be the set of all $i\in \mathbb Z_m$ with $C_i=\{0\}$. Then $\mathcal C$ is a $\mu_a$-LCD GQC code if and only if one of the following statements holds:

(i) $S=-aS$, where $S=\mathbb Z_m \setminus T$ and $-aS=\{-as \pmod{m}: s\in S\}$;

(ii) $T=-aT$;

(iii) $\mu_{-a}(\mathcal  C)=\mathcal C$.
\end{theorem}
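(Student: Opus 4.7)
The plan is to reduce everything to Corollary~\ref{cor:mu-a-iq}(ii), which states that $\mathcal{C}$ is $\mu_a$-LCD if and only if $C_i \cap C_{-ai}^{\perp'} = \{0\}$ for every $i\in\mathbb{Z}_m$. The decisive fact supplied by the hypothesis $C_i\in\{\{0\},\mathbb{V}_i\}$ is that the Euclidean $\mathbb{V}_i$-dual toggles accordingly: $C_i^{\perp'}=\mathbb{V}_i$ when $C_i=\{0\}$, and $C_i^{\perp'}=\{0\}$ when $C_i=\mathbb{V}_i$. This binary dichotomy is what makes the three conditions collapse into purely set-theoretic statements about $S$ and $T$.

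With this in hand, the equivalences (i)$\Leftrightarrow$(ii)$\Leftrightarrow$``$\mathcal{C}$ is $\mu_a$-LCD'' come out quickly. First, since $\gcd(-a,m)=1$, multiplication by $-a$ is a bijection of $\mathbb{Z}_m$, and $S,T$ partition $\mathbb{Z}_m$; hence $-aS=S$ if and only if $-aT=T$, giving (i)$\Leftrightarrow$(ii). For the LCD equivalence, the toggle observation shows that $C_i\cap C_{-ai}^{\perp'}$ is nonzero exactly when $i\in S$ \emph{and} $-ai\in T$. Forbidding this for every $i$ amounts to the implication $i\in S\Rightarrow -ai\in S$, that is $-aS\subseteq S$; bijectivity of $-a$ then upgrades the inclusion to $-aS=S$, which is~(i). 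The converse direction is immediate from the same toggling.

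For (ii)$\Leftrightarrow$(iii) I would reuse the constituent formula already derived in the lines preceding (\ref{eq:mu-C}): for any $b$ coprime to $m$, the $i$-th constituent of $\mu_b(\mathcal{C})$ equals $C_{bi}$. Specializing $b=-a$, one gets that $\mu_{-a}(\mathcal{C})=\mathcal{C}$ if and only if $C_{-ai}=C_i$ for every $i\in\mathbb{Z}_m$. Under the $\{0\}$-or-full hypothesis, this equality simply says ``both are trivial or both are full,'' which is precisely $i\in T\Leftrightarrow -ai\in T$, i.e.\ $T=-aT$. The only subtlety deserving attention is that $C_i$ and $C_{-ai}$ nominally live in different ambient spaces $\mathbb{V}_i$ and $\mathbb{V}_{-ai}$, so ``$C_{-ai}=C_i$'' must be interpreted at the level of the constituent decomposition; however since the hypothesis only distinguishes ``trivial'' from ``full'' at each index, the interpretation is unambiguous and no delicate $\mathbb{V}_i\leftrightarrow\mathbb{V}_{-ai}$ isomorphism is invoked. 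This is the step I expect to require the most care, but it is not a genuine obstacle.
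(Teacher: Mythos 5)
Your proof is correct and follows essentially the same route as the paper's: both reduce to the constituent-wise criterion $C_i\cap C_{-ai}^{\perp'}=\{0\}$ from Corollary~\ref{cor:mu-a-iq}, exploit the $\{0\}$-or-$\mathbb V_i$ dichotomy (and the identity $\mathbb V_{-ai}=\mathbb V_i$) to convert it into the set condition $-aS=S$, and compare the constituents $C_{-ai}$ of $\mu_{-a}(\mathcal C)$ with those of $\mathcal C$ to handle part~(iii). No gaps.
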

\begin{proof}
From Equations (\ref{eq:decomp}) and (\ref{eq:mu-C-dual-simple}), $\mathcal C$ is $\mu_a$-LCD if and only if $\mathcal C_{i} \cap \mathcal C_{-ai}=\{0\}$. By the definition of $S$
and $T$, $\mathcal C$ is $\mu_a$-LCD if and only if $S=-aS$. Thus, $\mathcal C$ is $\mu_a$ LCD if and only if $T=-aT$.

Let  $\mathcal C$ be $\mu_a$-LCD. From Equation (\ref{eq:decomp}), one has
\begin{align*}
\mathcal C \cong  \left( \prod_{i\in \Gamma(q,m) \cap T}  \{0\} \right) \times  \left (\prod_{i\in \Gamma(q,m) \cap S} \mathbb V_i\right).
\end{align*}
 Note that $T=-aT$, $S=-a S$, and $\mathbb V_{-ai}=\mathbb V_i$. By  Equation (\ref{eq:mu-C}) we obtain
\begin{align*}
\mu_{-a}(\mathcal C)\cong \left( \prod_{i\in \Gamma(q,m) \cap T} \{0\} \right) \times   \left( \prod_{i\in \Gamma(q,m) \cap S} \mathbb V_i \right).
\end{align*}
Hence, $\mu_{-a}(\mathcal  C)=\mathcal C$.

Conversely, let $\mu_{-a}(\mathcal  C)=\mathcal C$. From Equation (\ref{eq:mu-C}), one has
\begin{align*}
\mu_{-a} (\mathcal C) &\cong  \left( \prod_{i\in \Gamma(q,m)\cap T}  C_{-ai} \right) \times  \left( \prod_{i\in \Gamma(q,m)\cap S}  C_{-ai} \right)\\
& = \left( \prod_{i\in \Gamma(q,m)\cap T} \{0\} \right) \times  \left( \prod_{i\in \Gamma(q,m)\cap S}  \mathbb V_i \right)  (\cong \mathcal C).
\end{align*}

Since $\mathbb V_i\neq \{0\}$ if $i\in S$ and $\mathbb V_{-ai}=\mathbb V_{i}$, one obtains $-aS=S$. Hence, $\mathcal  C$ is $\mu_a$-LCD from the above discussion.

\end{proof}
The following result shows that GQC codes can be viewed as $\sigma$-LCD codes for an appropriate mapping $\sigma$.
\begin{corollary}\label{cor:0-Vi-LCD}
Let  $\mathcal C$ be a $q$-ary GQC code of block lengths $(m_1, \cdots,  m_{l})$ with $\mathbf{gcd}(\prod_{j=1}^l m_j,q)=1$ and $C_i$ ($i\in \Gamma(q,m)$) be its constituents defined as Equation (\ref{eq:constituent}).
If $C_i=\{0\} \text{ or } \mathbb V_i$ for $i\in \Gamma(q,m)$, then $\mathcal  C$ is always a $\mu_{-q^j}$-LCD code for any nonnegative integer $j$. In particular, any cyclic code  $\mathcal C$ of length $m$ with $\mathbf{gcd}(q,m)=1$ is $\mu_{-1}$-LCD and $\sigma$-LCD, where $\sigma$
is defined by
$$\sigma(c_0, c_1,\cdots, c_{m-1})=(c_{m-1}, c_{m-2}, \cdots, c_0),$$
for $(c_0, c_1,\cdots, c_{m-1})\in \mathcal C$.
\end{corollary}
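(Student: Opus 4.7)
The plan is to apply Theorem~\ref{thm:mu-a-0-Vi} with parameter $a=-q^{j}$. Set $T=\{i\in\mathbb Z_{m}:C_{i}=\{0\}\}$ as in that theorem. The key observation is Equation~(\ref{eq:C-iq}): from $C_{iq^{k}}=C_{i}^{q^{k}}$ one sees that $C_{iq^{k}}=\{0\}$ if and only if $C_{i}=\{0\}$, so $T$ is a union of $q$-cyclotomic cosets modulo $m$. In particular $qT=T$ and, iterating, $q^{j}T=T$ for every $j\ge 0$; moreover $\mathbf{gcd}(-q^{j},m)=1$ since $m\mid\prod_{j}m_{j}$ and $\mathbf{gcd}(\prod_{j}m_{j},q)=1$, so $a=-q^{j}$ is admissible in the theorem. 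The condition $T=-aT$ of Theorem~\ref{thm:mu-a-0-Vi}(ii) then reduces to $T=q^{j}T$, which is automatic, so $\mathcal C$ is $\mu_{-q^{j}}$-LCD. For the cyclic specialization I take $l=1$ and $m_{1}=m$; then $\mathbb V_{i}=\mathbb F_{q}[\xi^{i}]$ is a field and every constituent $C_{i}$, being an $\mathbb F_{q}[\xi^{i}]$-submodule of the one-dimensional space $\mathbb V_{i}$, is automatically $\{0\}$ or $\mathbb V_{i}$; specializing to $j=0$ gives that $\mathcal C$ is $\mu_{-1}$-LCD.

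To pass from $\mu_{-1}$-LCD to $\sigma$-LCD for the full reversal $\sigma$, I will identify $\sigma$ with $\mu_{-1}$ up to a cyclic shift. A direct computation on monomials shows that $\mu_{-1}$, corresponding to $c(x)\mapsto c(x^{-1})$ in $\mathbb F_{q}[x]/(x^{m}-1)$, sends $(c_{0},c_{1},\ldots,c_{m-1})$ to $(c_{0},c_{m-1},\ldots,c_{1})$; composing with a single cyclic shift $\tau$ then produces $\sigma$ exactly. Because $\mathcal C$ is cyclic we have $\tau(\mathcal C)=\mathcal C$, and because $\mu_{-1}$ is an $\mathbb F_{q}$-algebra automorphism of $\mathbb F_{q}[x]/(x^{m}-1)$ it carries the ideal $\mathcal C$ to another ideal, so $\mu_{-1}(\mathcal C)$ is also cyclic and hence fixed by $\tau$. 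This gives $\sigma(\mathcal C)=\tau(\mu_{-1}(\mathcal C))=\mu_{-1}(\mathcal C)$, whence $\mathcal C^{\perp_{\sigma}}=(\sigma(\mathcal C))^{\perp}=(\mu_{-1}(\mathcal C))^{\perp}=\mathcal C^{\perp_{\mu_{-1}}}$, and so $\sigma$-LCD is equivalent to the $\mu_{-1}$-LCD property already established.

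The GQC half is essentially immediate, since it reduces to the stability of $T$ under multiplication by $q$. The only step requiring a little care is the final $\sigma\leftrightarrow\mu_{-1}$ dictionary for cyclic codes, where one must exploit that $\mu_{-1}$ is a ring automorphism to conclude that $\mu_{-1}(\mathcal C)$ is again cyclic, and thereby that applying $\tau$ does not change it.
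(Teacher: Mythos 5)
Your proof is correct and follows essentially the same route as the paper's: both apply Theorem~\ref{thm:mu-a-0-Vi} after observing via Equation~(\ref{eq:C-iq}) that the support set of nonzero constituents is a union of $q$-cyclotomic cosets (hence stable under multiplication by $q^{j}$), and both handle the reversal map $\sigma$ by writing it as $\mu_{-1}$ followed by a cyclic shift and using that $\mu_{-1}(\mathcal C)$ is again cyclic, so $\sigma(\mathcal C)=\mu_{-1}(\mathcal C)$. Your explicit check that $\mathbf{gcd}(-q^{j},m)=1$ is a small point of added care that the paper leaves implicit.
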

\begin{proof}
Let $S=\{i\in \mathbb Z_m: C_i\neq \{0\}\}$. For any $i \in S$, $ C_{iq^j}=  C_{i}^{q^j}\neq \{0\}$. Thenfore $-(-q^j)i\in S$, which implies that  $-(-q^j)S=S$. From Theorem \ref{thm:mu-a-0-Vi}, we deduce that $\mathcal C$ is $\mu_{-q^j}$-LCD.

 Now, let $\mathcal C$ be a cyclic code. In this case, $\mathbb V_i=\mathbb F_q[\xi^i]$ and $C_i=\{0\} \text{ or } \mathbb V_i$.  Therefore, $\mathcal  C$ is $\mu_{-1}$-LCD,
 that is, $\mathcal  C \cap \mathcal C^{\perp_{\mu_{-1}}}=\{0\}$. For any $c_0+c_1x+ \cdots +c_{m-1} x^{m-1}$, $\mu_{-1}(c_0+c_1x+ \cdots + c_{m-1} x^{m-1}) = c_0+c_{m-1}x
 +c_{m-2} x^2+\cdots +c_2 x^{m-2}+c_1x^{m-1}$. Then,
 \begin{align*}
 \mu_{-1}(\mathcal C)&=\{c_0+c_{m-1}x
 +\cdots +c_1x^{m-1}: c_0+c_1x+ \cdots+ c_{m-1} x^{m-1}\in \mathcal  C\}\\
 &=\{x^{m-1}(c_0+c_{m-1}x
 +\cdots +c_1x^{m-1}): c_0+c_1x+ \cdots +c_{m-1} x^{m-1}\in \mathcal  C\}\\
 &= \{c_{m-1}+\cdots +c_1x^{m-2}+c_0x^{m-1}: c_0+c_1x+ \cdots +c_{m-1} x^{m-1}\in \mathcal  C\},
 \end{align*}
where the second identity  follows from the fact that $\mu_{-1}(\mathcal C)$ is a cyclic code.

 Thus, $\mathcal C^{\perp_{\mu_{-1}}}=(\mu_{-1}(C))^{\perp}=(\sigma(\mathcal C))^{\perp}$ and
 $\mathcal  C \cap \mathcal C^{\perp_{\sigma}}=\{0\}$. Hence, $\mathcal C$ is $\sigma$-LCD.
\end{proof}

Let $\mathcal C$ be a GQC code of block lengths $(m_1, \cdots, m_l)$. For any $j\in \{1,\cdots, l \}$, define
$$\mathcal C_j=\{c_j(x): (c_1(x), \cdots, c_j(x), \cdots, c_l(x))\in \mathcal C\}.$$ Then, $\mathcal  C_j$ is a cyclic code over $\mathbb F_q$ of length $m_j$. The following result shows how one can construct a $\sigma$-LCD GQC code $\mathcal C$ from $\sigma$-LCD cyclic codes  $\mathcal C_j$.

\begin{theorem}\label{thm:mi-mj-lcd}
Let $m_1, \cdots, m_l$ be positive integers with $\mathbf{gcd}(m_j,m_k)=1$ for $1\le j < k\le l$ and $q$ be a prime power  such that $\mathbf{gcd}(\prod_{j=1}^{l} m_j, q)=1$.
Let $a$ be an integer with  $\mathbf{gcd}(\prod_{j=1}^{l} m_j, a)=1$.
Then, $\mathcal C$ is a $\mu_{a}$-LCD GQC codes  if and only if $\mathcal C_j$ is a $\mu_{a}$-LCD cyclic code for any $j\in \{1,2,\cdots,l\}$ and $C_0=\{(c_1(1), \cdots,
c_l(1)): (c_1(x), \cdots, c_l(x)) \in \prod_{j=1}^l \mathbb F_q[x]/(x^{m_j}-1)\}$ is an Euclidean LCD code in $\mathbb F_q^l$.
\end{theorem}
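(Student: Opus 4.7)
The plan is to apply Corollary~\ref{cor:mu-a-iq} to $\mathcal C$ (and to each $\mathcal C_j$), and to match up the resulting constituent-level conditions by exploiting the pairwise coprimality of the $m_j$'s. Put $m=\prod_{j=1}^l m_j$; by coprimality, $m$ is the $\text{lcm}$ of the $m_j$'s. I would first analyze the structure of $\mathbb V_i$: since $\delta_{j,i}=1$ iff $(m/m_j)\mid i$, and coprimality gives $\text{lcm}(m/m_j,m/m_k)=m$ for $j\neq k$, at most one $\delta_{j,i}$ is nonzero when $i\not\equiv 0\pmod m$. Hence $\mathbb V_0=\mathbb F_q^l$, while for $i\neq 0$ either $\mathbb V_i=0$ or $\mathbb V_i=\{0\}\times\cdots\times\mathbb F_q[\xi^i]\times\cdots\times\{0\}$, the unique nonzero slot $j$ being determined by $(m/m_j)\mid i$.

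Next, I would split the condition ``$C_i\cap C_{-ai}^{\perp'}=\{0\}$ for all $i\in\mathbb Z_m$'' from Corollary~\ref{cor:mu-a-iq} according to whether $i=0$ or not. At $i=0$ the ambient space is $\mathbb F_q^l$ with the standard Euclidean inner product, so the condition reduces exactly to $C_0\cap C_0^\perp=\{0\}$, i.e.\ $C_0$ is Euclidean LCD in $\mathbb F_q^l$. For $i\neq 0$, $\mathbf{gcd}(a,m)=1$ implies $(m/m_j)\mid i$ iff $(m/m_j)\mid(-ai)$, so $\mathbb V_i$ and $\mathbb V_{-ai}$ share the same nonzero slot $j$. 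Identifying these one-dimensional $\mathbb F_q[\xi^i]$-spaces with $\mathbb F_q[\xi^i]$ via the $j$-th coordinate, the vanishing of $\delta_{k,i}$ for $k\neq j$ kills every other evaluation, so $C_i$ becomes $\{c_j(\xi^i):c_j(x)\in\mathcal C_j\}$, i.e.\ the constituent of the cyclic code $\mathcal C_j$ at $i'=im_j/m\in\mathbb Z_{m_j}$. Moreover the Euclidean $\mathbb V_i$-dual manifestly collapses to the Euclidean dual in $\mathbb F_q[\xi^i]$ under the same identification.

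Letting $j$ run through $\{1,\ldots,l\}$ and $i$ over the nonzero multiples of $m/m_j$, the family of conditions ``$C_i\cap C_{-ai}^{\perp'}=\{0\}$ for $i\neq 0$'' becomes precisely the collection of constituent-LCD conditions for each $\mathcal C_j$ at \emph{nonzero} indices of $\mathbb Z_{m_j}$ under the bijection $i\leftrightarrow i'=im_j/m$, with $-ai$ sent to $-ai'$. Since the $i'=0$ constituent of a cyclic code lies in $\mathbb F_q$ and is therefore at most one-dimensional, the LCD condition there is automatic; so those nonzero-$i'$ conditions are equivalent to $\mathcal C_j$ being $\mu_a$-LCD as a whole, again via Corollary~\ref{cor:mu-a-iq}. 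Combining with the $i=0$ case proves both directions of the equivalence.

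The main obstacle will be making the identification at $i\neq 0$ airtight: one must verify that the surjective projection $\pi_j\colon\mathcal C\to\mathcal C_j$ (built into the definition of $\mathcal C_j$) combined with the forced vanishing of the other coordinates of $\mathbb V_i$ yields $C_i=\iota_j(D_{i'})$ with $D_{i'}$ the constituent of $\mathcal C_j$ at $i'$ and $\iota_j$ the embedding into the $j$-th slot, and that Euclidean duality in $\mathbb V_i\subset\mathbb F_q[\xi^i]^l$ restricts to standard Euclidean duality on $\mathbb F_q[\xi^i]$ after this identification. Once those identifications are secured, the equivalence follows by bookkeeping across the constituents.
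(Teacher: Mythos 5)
Your proposal is correct and follows essentially the same route as the paper: both rest on Corollary~\ref{cor:mu-a-iq} together with the observation that pairwise coprimality of the $m_j$ forces each $\mathbb V_i$ with $i\neq 0$ to have at most one nonzero slot, so that the nonzero-index conditions localize to the cyclic codes $\mathcal C_j$ while $i=0$ yields the Euclidean LCD condition on $C_0$. The only difference is cosmetic: the paper routes the nonzero-index bookkeeping through the support-set criterion $S_j=-aS_j$ of Theorem~\ref{thm:mu-a-0-Vi}, whereas you match the constituent conditions of $\mathcal C$ and of each $\mathcal C_j$ directly.
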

\begin{proof}
Let $\xi$ be a primitive $m$-th  root of unity with $m=\prod_{j=1}^l m_j$ and $\hat{m}_j=\frac{m}{m_j}$. For any $j\in \{1,\cdots, l\}$,
let $T_j=\{i\in \mathbb Z_{m_j}: c_j(\xi^{\hat{m}_j i})=0\}$ and $S_j=\mathbb Z_{m_j}\setminus T_j$.

Let $\mathcal C$ be $\mu_{a}$-LCD. By Corollary \ref{cor:mu-a-iq}, for any $j\in \{1,\cdots, l\}$ and $i\in S_j$ with $i\neq 0$, one has
\begin{align*}
C_{\hat{m}_j i} \cap C_{-a\hat{m}_j i}^{\perp'}=\{0\}.
\end{align*}
Since $\mathbf{gcd}(m_i, m_j)=1$ for $1\le i< j\le l$, $C_{\hat{m}_j i} \cap C_{-a\hat{m}_j i}^{\perp'}=\{0\}$ for $i\in S_j\setminus \{0\}$  if and only if $-aS_j=S_j$.
From Theorem \ref{thm:mu-a-0-Vi}, $\mathcal C_j$ is $\mu_a$-LCD.

By Corollary \ref{cor:mu-a-iq} and using the fact that $\mathbb V_0=\mathbb F_q^l$, we deduce that $C_0$ is an Euclidean LCD code in $\mathbb F_q^l$.
Conversely, assume that $C_0$ is an Euclidean LCD code in $\mathbb F_q^l$ and $\mathcal C_j$ is $\mu_a$-LCD for any $j\in \{1, \cdots, l\}$.
It is observed that $\mathbb V_i=\{0\}$
for $i\in \mathbb Z_m^* \setminus (\cup_{j=1}^l S_j)$, where $\mathbb Z_m^*=\{x\in \mathbb Z_m: x\neq 0\}$. In particular, $C_{i}\cap C_{-ia}^{\perp'}=\{0\}$.
Let $\hat{m}_j i \in S_j \setminus \{0\}$. Since  $\mathcal C_j$ is $\mu_a$-LCD, from Theorem \ref{thm:mu-a-0-Vi} $-am_j i\in S_j$.
Note that
\begin{align*}
\mathbb V_{-a\hat{m}_ji}\cong \{0\}^{l-1}\times \mathbb F_q[\xi^{-a\hat{m}_ji}]  \text{ and }   C_{-a\hat{m}_j i}=\mathbb V_{-a\hat{m}_ji}.
\end{align*}
Then, $(C_{-a\hat{m}_j i})^{\perp'}=\{0\}$ and  $C_{\hat{m}_j i} \cap (C_{-a\hat{m}_j i})^{\perp'}=\{0\}$. Hence, $C_i \cap C_{-ai}^{\perp'}=\{0\}$ for any $i\in \mathbb Z_m$, which completes the proof.

\end{proof}

From Corollary \ref{cor:0-Vi-LCD} and Theorem \ref{thm:mi-mj-lcd}, we derive the following result which shows  that certain Euclidean LCD codes give rise to $\sigma$-LCD GCD codes and vice versa.

\begin{corollary}
Let $m_1, \cdots, m_l$ be positive integers with $\mathbf{gcd}(m_j,m_k)=1$ for $1\le j < k\le l$ and $q$ be a prime power  such that $\mathbf{gcd}(\prod_{j=1}^{l} m_j, q)=1$.
Then, $\mathcal C$ is a $\mu_{-1}$-LCD GQC codes  if and only if $C_0=\{(c_1(1), \cdots,
c_l(1)): (c_1(x), \cdots, c_l(x)) \in \prod_{j=1}^l \mathbb F_q[x]/(x^{m_j}-1)\}$ is an Euclidean LCD code in $\mathbb F_q^l$.
\end{corollary}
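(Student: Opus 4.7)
The plan is to obtain this corollary as an immediate specialization of Theorem \ref{thm:mi-mj-lcd} to the case $a=-1$, combined with the cyclic-code statement in Corollary \ref{cor:0-Vi-LCD}. First I would verify that the hypotheses of Theorem \ref{thm:mi-mj-lcd} are satisfied when we take $a=-1$: the condition $\gcd(\prod_{j=1}^l m_j,a)=1$ is trivial for $a=-1$, and the condition $\gcd(\prod_{j=1}^l m_j,q)=1$ is exactly part of our hypothesis. So Theorem \ref{thm:mi-mj-lcd} gives the equivalence: $\mathcal C$ is $\mu_{-1}$-LCD if and only if every component cyclic code $\mathcal C_j\subseteq \mathbb F_q[x]/(x^{m_j}-1)$ is $\mu_{-1}$-LCD and $C_0$ is Euclidean LCD in $\mathbb F_q^l$.

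The next step is to note that the first set of conditions, namely that each $\mathcal C_j$ is a $\mu_{-1}$-LCD cyclic code, holds automatically. Indeed, from $\gcd(\prod_{j=1}^l m_j,q)=1$ we infer $\gcd(m_j,q)=1$ for every $j$, so each $\mathcal C_j$ is a cyclic code of length $m_j$ over $\mathbb F_q$ with $\gcd(m_j,q)=1$. The ``in particular'' part of Corollary \ref{cor:0-Vi-LCD} then asserts that every such cyclic code is $\mu_{-1}$-LCD. Hence all constraints in Theorem \ref{thm:mi-mj-lcd} other than the one on $C_0$ are vacuous.

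Combining these two observations, the equivalence supplied by Theorem \ref{thm:mi-mj-lcd} collapses to: $\mathcal C$ is $\mu_{-1}$-LCD if and only if $C_0$ is Euclidean LCD in $\mathbb F_q^l$, which is exactly the statement to be proved. There is no genuine obstacle here; the corollary is essentially a repackaging of the two previous results and its proof is one or two lines. The only mild subtlety worth mentioning is that one should read the definition of $C_0$ in the statement as the constituent defined in Equation (\ref{eq:constituent}) for $i=0$ — i.e.\ the projection $(c_1(1),\dots,c_l(1))$ taken over codewords of $\mathcal C$ (not over the whole ambient product) — which is the convention used throughout the preceding theorems.
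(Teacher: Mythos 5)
Your proposal is correct and follows exactly the route the paper intends: specialize Theorem \ref{thm:mi-mj-lcd} to $a=-1$ and use the ``in particular'' part of Corollary \ref{cor:0-Vi-LCD} to discharge the conditions on the component cyclic codes $\mathcal C_j$ (the paper itself only says ``From Corollary \ref{cor:0-Vi-LCD} and Theorem \ref{thm:mi-mj-lcd}, we derive the following result'' and gives no further argument). Your remark that $C_0$ should be read as the constituent of $\mathcal C$ from Equation (\ref{eq:constituent}) at $i=0$, rather than the literal projection of the whole ambient module, correctly identifies a typo carried over from the statement of Theorem \ref{thm:mi-mj-lcd}.
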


\subsection{ $\sigma$-LCD and $\sigma$ self-orthogonal  for $1$-generator GQC codes}

Let $\mathbf c(x)=(c_1(x), \cdots, c_{l}(x))$, where  $c_j(x)\in \mathbb F_q[x]/(x^{m_j}-1)$ for $j\in \{1,2, \cdots l\}$. Then
\begin{align*}
\mathbb F_q[x] \mathbf c(x)=\{(c(x)c_1(x), \cdots, c(x)c_{l}(x)): c(x)\in \mathbb F_q[x]\}
\end{align*}
is called a \emph{1-generator GQC code} with generator $\mathbf c(x)$. The $1$-generator GQC codes  are the most common GQC codes studied in the literature (see for instance \cite{Cy11, Seg04,PZ07}). The purpose of this subsection is to describe more explicitly $\sigma$-LCD and $\sigma$ self-orthogonal for $1$-generator GQC codes. Let $a$ be a positive integer and  $\mu_a$ be the transformation defined as above. We start with the following result which provides a necessary and sufficient condition for a $1$-generator GQC code to be $\mu_a$-LCD and $\mu_a$ self-orthogonal, respectively.
\begin{proposition}\label{prop:1-generator}
Let $m_1, \cdots,  m_{l}$ be positive integers with $\mathbf{gcd}(\prod_{j=1}^l m_j,q)=1$. Let $\mathcal C=\mathbb F_q[x] \mathbf c(x)$ be a  $1$-generator GQC code with $\mathbf c(x)\in \prod_{j=1}^l \mathbb F_q[x]/(x^{m_j}-1)$. Then $\mathcal C$ is  $\mu_a$-LCD if and only if  for any
$i\in \Gamma(q,m)$ with $(\delta_{1,i} c_1(\xi^i), \cdots,  \delta_{l,i} c_l(\xi^i))\neq 0$, the following condition holds:
\begin{align*}
\delta_{1,i} c_1(\xi^i) c_1(\xi^{-ai}) +\cdots + \delta_{l,i} c_l(\xi^i) c_l(\xi^{-ai})\neq 0.
\end{align*}
Further, $\mathcal C$ is  $\mu_a$ self-orthogonal if and only if for any
$i\in \Gamma(q,m)$, the following identity holds:
\begin{align*}
\delta_{1,i} c_1(\xi^i) c_1(\xi^{-ai}) +\cdots + \delta_{l,i} c_l(\xi^i) c_l(\xi^{-ai})= 0.
\end{align*}
\end{proposition}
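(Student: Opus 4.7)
The plan is to reduce the statement to Corollary~\ref{cor:mu-a}, which rephrases $\mu_a$-LCD and $\mu_a$ self-orthogonality of a GQC code as the constituent-wise conditions $C_i\cap C_{-ai}^{\perp'}=\{0\}$ and $C_i\subseteq C_{-ai}^{\perp'}$, respectively, for every $i\in\Gamma(q,m)$. So the first step is to make the constituents of $\mathcal C=\mathbb F_q[x]\mathbf c(x)$ explicit. A typical codeword has the form $c(x)\mathbf c(x)=(c(x)c_1(x),\dots,c(x)c_l(x))$ and evaluating at $\xi^i$ (following the definition in Equation~(\ref{eq:constituent})) gives $(c(\xi^i)c_j(\xi^i)\delta_{j,i})_{j=1}^{l}=c(\xi^i)\mathbf v_i$, where $\mathbf v_i:=(\delta_{1,i}c_1(\xi^i),\dots,\delta_{l,i}c_l(\xi^i))$. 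Since $c(\xi^i)$ sweeps all of $\mathbb F_q[\xi^i]$ as $c(x)$ runs over $\mathbb F_q[x]$, the constituent is the cyclic $\mathbb F_q[\xi^i]$-module $C_i=\mathbb F_q[\xi^i]\,\mathbf v_i$, which is either $\{0\}$ (when $\mathbf v_i=0$) or one-dimensional over $\mathbb F_q[\xi^i]$.

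Next I would describe $C_{-ai}$ in the same ambient space $\mathbb V_i$. Because $\gcd(a,m)=1$, the paper already records that $\mathbb V_{-ai}=\mathbb V_i$, and a one-line check using $\xi^{im_j}=1\iff\xi^{-aim_j}=1$ gives $\delta_{j,-ai}=\delta_{j,i}$. Thus $C_{-ai}=\mathbb F_q[\xi^i]\,\mathbf v_{-ai}$, with
\[
<\mathbf v_i,\mathbf v_{-ai}>=\sum_{j=1}^{l}\delta_{j,i}c_j(\xi^i)c_j(\xi^{-ai}).
\]
Now I use that $C_{-ai}^{\perp'}$ is itself an $\mathbb F_q[\xi^i]$-submodule of $\mathbb V_i$ (if $\mathbf w\in C_{-ai}^{\perp'}$ then so is $\gamma\mathbf w$ for every $\gamma\in\mathbb F_q[\xi^i]$, by pulling scalars through the $\mathbb F_q[\xi^i]$-bilinear pairing). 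Consequently $C_i=\mathbb F_q[\xi^i]\,\mathbf v_i\subseteq C_{-ai}^{\perp'}$ is equivalent to the single equation $\mathbf v_i\in C_{-ai}^{\perp'}$, which in turn (since $C_{-ai}$ is generated by $\mathbf v_{-ai}$) is equivalent to $<\mathbf v_i,\mathbf v_{-ai}>=0$. Running this through Corollary~\ref{cor:mu-a}(i) yields the self-orthogonality characterization.

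For the LCD characterization I would split on whether $\mathbf v_i=0$. When $\mathbf v_i=0$ the constituent $C_i$ vanishes and the intersection $C_i\cap C_{-ai}^{\perp'}$ is trivial automatically, which is exactly why the proposition only quantifies over indices with $\mathbf v_i\neq 0$. When $\mathbf v_i\neq 0$, a general element of $C_i$ has the form $\alpha\mathbf v_i$ with $\alpha\in\mathbb F_q[\xi^i]$, and $\alpha\mathbf v_i\in C_{-ai}^{\perp'}$ means $\alpha\beta<\mathbf v_i,\mathbf v_{-ai}>=0$ for every $\beta\in\mathbb F_q[\xi^i]$; because $\mathbb F_q[\xi^i]$ is a field this collapses to $\alpha<\mathbf v_i,\mathbf v_{-ai}>=0$, so $C_i\cap C_{-ai}^{\perp'}=\{0\}$ if and only if $<\mathbf v_i,\mathbf v_{-ai}>\neq 0$. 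Combining the two cases over $i\in\Gamma(q,m)$ and invoking Corollary~\ref{cor:mu-a}(ii) produces exactly the LCD condition stated. The only non-routine ingredient is recognizing that $C_{-ai}^{\perp'}$ carries the $\mathbb F_q[\xi^i]$-module structure that turns the whole system of orthogonality relations into a single scalar equation; after that, everything is a bookkeeping exercise on the constituents.
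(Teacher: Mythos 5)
Your proposal is correct and follows essentially the same route as the paper: compute the constituents of the $1$-generator code as the cyclic $\mathbb F_q[\xi^i]$-modules generated by $(\delta_{1,i}c_1(\xi^i),\dots,\delta_{l,i}c_l(\xi^i))$, reduce each orthogonality condition to the single scalar equation $\sum_j\delta_{j,i}c_j(\xi^i)c_j(\xi^{-ai})=0$ (or $\neq 0$), and invoke Corollary~\ref{cor:mu-a}. You merely spell out the steps the paper leaves implicit (the module structure of $C_{-ai}^{\perp'}$ and the field argument collapsing the system to one equation), which is a faithful elaboration rather than a different proof.
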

\begin{proof}
For the 1-generator GQC code $\mathcal C=\mathbb F_q[x] \mathbf c(x)$, one has
\begin{align*}
C_i=\{\alpha (\delta_{1,i} c_1(\xi^i), \cdots,  \delta_{l,i} c_l(\xi^i)): \alpha \in \mathbb F_q[\xi^i] \},
\end{align*}
and
\begin{align*}
C_{-ai}=\{\alpha (\delta_{1,i} c_1(\xi^{-ai}), \cdots,  \delta_{l,i} c_l(\xi^{-ai})): \alpha \in \mathbb F_q[\xi^i] \}.
\end{align*}
Thus, $C_{i} \cap C_{-ai}^{\perp'} =\emptyset$ if and only if  for any $i\in \Gamma(q,m)$,
 $(\delta_{1,i} c_1(\xi^i), \cdots,  \delta_{l,i} c_l(\xi^i))=0$ or $ \sum_{j=1}^l \delta_{j,i} c_j(\xi^i) c_j(\xi^{-ai})\neq 0$.
Further, $C_{i} \subseteq C_{-ai}^{\perp'}$ if and only if $ \sum_{j=1}^l \delta_{j,i} c_j(\xi^i) c_j(\xi^{-ai})= 0$ for any $i\in \Gamma(q,m)$.
 The proof is then completed  from Corollary \ref{cor:mu-a}.
\end{proof}

The following result can be deduced from Proposition \ref{prop:1-generator} with $a=-1$ (we shall use the fact $\delta_{1,i} c_1(\xi^i)^2 +\cdots + \delta_{l,i} c_l(\xi^i) ^2=(\delta_{1,i} c_1(\xi^i) +\cdots +\delta_{l,i} c_l(\xi^i))^2$).

\begin{corollary}\label{cor:1-gene-LCD}
Let $m_1, \cdots,  m_{l}$ be $l$ positive integers with $\mathbf{gcd}(\prod_{j=1}^l m_j,q)=1$. Let $\mathcal C=\mathbb F_q[x] \mathbf c(x)$ be a  $1$-generator GQC code with $\mathbf c(x)\in \prod_{j=1}^l \mathbb F_q[x]/(x^{m_j}-1)$. Then
$\mathcal C$ is  $\mu_{-1}$-LCD  (resp. $\mu_{-1}$ self-orthogonal) if and only if $\delta_{1,i} c_1(\xi^i)^2 +\cdots +\delta_{l,i} c_l(\xi^i) ^2\neq 0$ for any
$i\in \Gamma(q,m)$ with $(\delta_{1,i} c_1(\xi^i), \cdots,  \delta_{l,i} c_l(\xi^i))\neq 0$
(resp. $\delta_{1,i} c_1(\xi^i)^2 +\cdots + \delta_{l,i} c_l(\xi^i) ^2= 0.$  for any
$i\in \Gamma(q,m)$).

Furthermore, when $q$ is a power of $2$, then $\mathcal C$ is  $\mu_{-1}$-LCD (resp. $\mu_{-1}$ self-orthogonal) if and only if
$\delta_{1,i} c_1(\xi^i) +\cdots +\delta_{l,i} c_l(\xi^i) \neq 0$ for any
$i\in \Gamma(q,m)$ with $(\delta_{1,i} c_1(\xi^i), \cdots,  \delta_{l,i} c_l(\xi^i))\neq 0$
(resp. $\delta_{1,i} c_1(\xi^i) +\cdots+ \delta_{l,i} c_l(\xi^i) = 0.$ for any $i\in \Gamma(q,m)$).
\end{corollary}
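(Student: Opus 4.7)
The plan is to derive the corollary as a direct specialization of Proposition \ref{prop:1-generator} to $a=-1$, followed by a Frobenius-linearity observation in the characteristic 2 case.

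First, I would apply Proposition \ref{prop:1-generator} with $a=-1$. Since $-ai = i$ in this case, the defining expression
\begin{align*}
\delta_{1,i} c_1(\xi^i) c_1(\xi^{-ai}) + \cdots + \delta_{l,i} c_l(\xi^i) c_l(\xi^{-ai})
\end{align*}
collapses to $\sum_{j=1}^l \delta_{j,i} c_j(\xi^i)^2$. The $\mu_{-1}$-LCD (respectively, $\mu_{-1}$ self-orthogonal) characterization in the first half of the corollary is then simply a restatement of the two conditions supplied by Proposition \ref{prop:1-generator}.

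Second, for the characteristic 2 refinement, I would exploit the fact that $\delta_{j,i} \in \{0,1\}$ so that $\delta_{j,i}^2 = \delta_{j,i}$, combined with additivity of the Frobenius: when $q = 2^r$, one has
\begin{align*}
\sum_{j=1}^l \delta_{j,i} c_j(\xi^i)^2 \;=\; \sum_{j=1}^l \bigl(\delta_{j,i} c_j(\xi^i)\bigr)^2 \;=\; \Bigl(\sum_{j=1}^l \delta_{j,i} c_j(\xi^i)\Bigr)^2.
\end{align*}
Since squaring is a bijection on the field extension $\mathbb{F}_q[\xi^i]$, the quantity $\sum_{j=1}^l \delta_{j,i} c_j(\xi^i)^2$ is zero (respectively nonzero) if and only if $\sum_{j=1}^l \delta_{j,i} c_j(\xi^i)$ is zero (respectively nonzero). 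Substituting this equivalence into the already established first half of the corollary yields the two characteristic 2 statements.

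There is essentially no obstacle: the corollary is a two-step reduction. The only point to record carefully is that $(\delta_{1,i}c_1(\xi^i),\ldots,\delta_{l,i}c_l(\xi^i)) = 0$ is equivalent to each individual term vanishing, which is stable under the squaring identification used in the characteristic 2 case, so the nondegeneracy hypothesis is preserved across the equivalence.
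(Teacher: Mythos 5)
Your proposal is correct and follows exactly the paper's route: the paper likewise obtains the corollary by setting $a=-1$ in Proposition \ref{prop:1-generator} (so that $\xi^{-ai}=\xi^i$ collapses the bilinear expression to $\sum_j \delta_{j,i}c_j(\xi^i)^2$) and then invokes the identity $\sum_j \delta_{j,i}c_j(\xi^i)^2=\bigl(\sum_j \delta_{j,i}c_j(\xi^i)\bigr)^2$ in characteristic $2$. Your added remarks on $\delta_{j,i}^2=\delta_{j,i}$ and the injectivity of squaring just make explicit what the paper leaves implicit.
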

The following results provide characterizations of LCD $1$-generator QC codes and self-orthogonal $1$-generator GC codes.

\begin{corollary}\label{cor:1-generator-poly}
Let $m$ be a positive integer with $\mathbf{gcd}(m,q)=1$. Let $\mathcal C=\mathbb F_q[x] \mathbf c(x)$ be a $1$-generator QC code of index $l$ with $\mathbf c(x)=(c_1(x), \cdots, c_l(x))\in
 \left(\mathbb F_q[x]/(x^{m}-1)\right)^l$. Then $\mathcal C$ is  $\mu_{a}$-LCD (resp. $\mu_a$ self-orthogonal) if and only if $\mathbf{gcd}(\sum_{j=1}^l c_j(x) c_j(x^{-a \pmod{m}}), x^m-1)=\mathbf{gcd}(c_1(x), \cdots, c_l(x), x^m-1)$ (resp. $\sum_{j=1}^l c_j(x) c_j(x^{-a \pmod{m}}) \equiv 0  \pmod{ x^m-1}$).
 \end{corollary}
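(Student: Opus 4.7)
The plan is to derive this corollary as a direct polynomial translation of Proposition \ref{prop:1-generator}, specialized to the quasi-cyclic case $m_1=\cdots=m_l=m$. In this case, for every $i\in\mathbb Z_m$ one has $\xi^{im_j}=\xi^{im}=1$, so all indicators satisfy $\delta_{j,i}=1$. Consequently Proposition \ref{prop:1-generator} reduces to: $\mathcal C$ is $\mu_a$ self-orthogonal iff $\sum_{j=1}^l c_j(\xi^i)c_j(\xi^{-ai})=0$ for every $i\in\Gamma(q,m)$, and $\mathcal C$ is $\mu_a$-LCD iff, for every $i\in\Gamma(q,m)$, the vanishing of that sum forces $(c_1(\xi^i),\ldots,c_l(\xi^i))=0$.

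For the self-orthogonal direction I would argue as follows. Set $f(x):=\sum_{j=1}^l c_j(x)\,c_j(x^{-a\bmod m})\in\mathbb F_q[x]$. Since $f$ has coefficients in $\mathbb F_q$, the Galois action shows $f(\xi^i)=0$ for one $i\in\mathbb C_i$ iff $f(\xi^j)=0$ for every $j\in\mathbb C_i$ (alternatively, this follows from Equation (\ref{eq:i-iq})). Thus the condition ``$f(\xi^i)=0$ for all $i\in\Gamma(q,m)$'' is equivalent to ``$f(\xi^i)=0$ for all $i\in\mathbb Z_m$,'' which by Equation (\ref{eq:M-xi-i}) and the factorization $x^m-1=\prod_{i\in\Gamma(q,m)}\mathbb M_{\xi^i}(x)$ is equivalent to $(x^m-1)\mid f(x)$, i.e.\ $f(x)\equiv 0\pmod{x^m-1}$. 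This gives the self-orthogonal criterion.

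For the LCD direction I would work with the two divisors of $x^m-1$
\begin{align*}
g(x):=\gcd(c_1(x),\ldots,c_l(x),x^m-1),\qquad
h(x):=\gcd(f(x),x^m-1).
\end{align*}
Because $\gcd(m,q)=1$, $x^m-1$ is squarefree over $\mathbb F_q$, so every divisor is determined by its set of roots among $\{\xi^i:i\in\mathbb Z_m\}$. By construction $g(\xi^i)=0$ iff $c_1(\xi^i)=\cdots=c_l(\xi^i)=0$, while $h(\xi^i)=0$ iff $f(\xi^i)=\sum_j c_j(\xi^i)c_j(\xi^{-ai})=0$. Clearly $g\mid h$ always, since any $i$ with all $c_j(\xi^i)=0$ automatically kills $f(\xi^i)$. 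The Proposition \ref{prop:1-generator} LCD criterion is precisely the reverse implication ``$f(\xi^i)=0\Rightarrow$ all $c_j(\xi^i)=0$,'' i.e.\ $h\mid g$. Combining these with squarefreeness of $x^m-1$ and monicness of both gcds yields $g=h$, which is exactly the stated identity.

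The main obstacle is conceptually small but must be handled carefully: one has to justify the passage from evaluation at $\xi^i$ with $i$ running over $\Gamma(q,m)$ to divisibility by $x^m-1$, and in particular to rule out hidden multiplicities on the LCD side. The key point is the squarefreeness of $x^m-1$, which ensures the two gcds are equal as soon as they have the same root sets. Once that is in place, the rest is a transparent translation from evaluations at roots of unity to polynomial identities modulo $x^m-1$.
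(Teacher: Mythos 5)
Your proposal is correct and follows essentially the same route as the paper, which simply invokes the identity $\mathbf{gcd}(f(x),x^m-1)=\prod_{i\in\mathbb Z_m,\,f(\xi^i)=0}(x-\xi^i)$ together with the constituent-wise criterion of Proposition \ref{prop:1-generator} (note the paper's proof cites Corollary \ref{cor:1-gene-LCD}, the $a=-1$ case, where the general statement really rests on the Proposition, as you use it). Your extra care about squarefreeness of $x^m-1$ and the Galois-stability of the conditions when passing from $\Gamma(q,m)$ to $\mathbb Z_m$ just makes explicit what the paper leaves implicit.
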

 \begin{proof}
 Firstly, recall the basic fact
  $$\mathbf{gcd}(f(x), x^m-1)=\prod_{i\in \mathbb Z_m, f(\xi^i)=0} (x-\xi^i),$$
  where $f(x)\in \mathbb F_q[x]$. Then, the proof follows from Corollary \ref{cor:1-gene-LCD}.
 \end{proof}

\begin{corollary}\label{cor:Si-Sj}
Let $m$ be a positive integer with $\mathbf{gcd}(m,q)=1$.  Let $\mathcal C=\mathbb F_q[x] \mathbf c(x)$ be a $1$-generator QC code of index $l$ with $\mathbf c(x)=(c_1(x), \cdots, c_l(x))\in
 \left(\mathbb F_q[x]/(x^{m}-1)\right)^l$ and $S_j=\{i\in \mathbb Z_m: c_j(\xi^i)\neq 0\}$ for $j\in \{1,2, \cdots, l\}$.
If $S_{i} \cap S_j= \emptyset$ for $1\le i< j\le l$, $\mathcal C=\mathbb F_q[x] (c_1(x), \cdots, c_l(x))$ is $\mu_{-1}$-LCD.
\end{corollary}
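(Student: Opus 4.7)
The plan is to reduce directly to Corollary \ref{cor:1-gene-LCD}. Since $\mathcal C$ is a QC code of index $l$, i.e.\ $m_1=\cdots=m_l=m$, every $\delta_{j,i}$ equals $1$, because $\xi^{im_j}=\xi^{im}=1$ for all $i$ and $j$. The $\mu_{-1}$-LCD half of Corollary \ref{cor:1-gene-LCD} then reduces the problem to verifying that
\[
\sum_{j=1}^{l} c_j(\xi^i)^2 \neq 0
\]
for every $i\in \Gamma(q,m)$ at which $(c_1(\xi^i),\ldots,c_l(\xi^i))\neq 0$. This is the target I would aim to establish.

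Next I would exploit the disjointness hypothesis. By definition of $S_j$, the condition $c_j(\xi^i)\neq 0$ is equivalent to $i\in S_j$, and the $S_j$ are pairwise disjoint by assumption. Therefore, for every $i\in \mathbb Z_m$, at most one index $j$ can satisfy $c_j(\xi^i)\neq 0$. In particular, whenever $(c_1(\xi^i),\ldots,c_l(\xi^i))$ is nonzero, there is exactly one such $j_0$, and the sum collapses to the single term $c_{j_0}(\xi^i)^2$. Since $c_{j_0}(\xi^i)$ is a nonzero element of the field $\mathbb F_q[\xi^i]$, its square is nonzero, so the required condition is satisfied.

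I do not anticipate a serious obstacle here: the statement is essentially a one-line consequence of Corollary \ref{cor:1-gene-LCD} once the disjointness of the $S_j$ is translated into ``at most one surviving term in the sum.'' The only point worth flagging is that Corollary \ref{cor:1-gene-LCD} phrases the $\mu_{-1}$-LCD test as a sum of squares (rather than as a sum of cross terms $c_j(\xi^i)c_j(\xi^{-i})$), which it can do because $-(-1)i\equiv i\pmod m$; this is what makes the reduction to a single nonzero term automatic and saves the argument from needing any assumption about how the supports $S_j$ behave under negation.
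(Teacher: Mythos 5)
Your proof is correct and follows essentially the same route as the paper: the paper invokes the polynomial-gcd reformulation (Corollary \ref{cor:1-generator-poly} with $a=-1$) and shows both gcds equal $\prod_{i\notin\cup_j S_j}(x-\xi^i)$, while you apply the evaluation form (Corollary \ref{cor:1-gene-LCD}) directly, but the underlying idea is identical — disjointness of the $S_j$ forces the sum $\sum_j c_j(\xi^i)^2$ to collapse to a single nonzero square at each relevant $i$. Your closing remark about why $a=-1$ makes the cross terms into squares, so that no compatibility of the $S_j$ with negation is needed, is exactly the right point to flag.
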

\begin{proof}
First, by choosing $a=-1$ in Corollary \ref{cor:1-generator-poly}, we observe that $\mathcal C$ is  $\mu_{-1}$-LCD (resp. $\mu_{-1}$ self-orthogonal) if and only if
 $\mathbf{gcd}(\sum_{j=1}^l c_j(x) ^2, x^m-1)=\mathbf{gcd}(c_1(x), \cdots, c_l(x), x^m-1)$ (resp. $\sum_{j=1}^l c_j(x) ^2 \equiv 0  \pmod{ x^m-1}$).
 Further, when $q$ is a power of $2$, then,  $\mathcal C$ is  $\mu_{-1}$-LCD (resp. $\mu_{-1}$ self-orthogonal) if and only if
 $\mathbf{gcd}(\sum_{j=1}^l c_j(x), x^m-1)=\mathbf{gcd}(c_1(x), \cdots, c_l(x), x^m-1)$ (resp. $\sum_{j=1}^l c_j(x)  \equiv 0  \pmod{ x^m-1}$).

Now, by $S_{i} \cap S_j= \emptyset$ for $1\le i< j\le l$, one has
\begin{align*}
\mathbf{gcd}(c_1(x), \cdots, c_l(x), x^m-1)= \prod_{i\in \mathbb Z_m \setminus \left( \cup_{j=1}^l S_j\right) } (x-\xi^i),
\end{align*}
and
\begin{align*}
\mathbf{gcd}(\sum_{j=1}^l c_j(x) ^2, x^m-1)= \prod_{i\in \mathbb Z_m \setminus \left( \cup_{j=1}^l S_j\right) } (x-\xi^i),
\end{align*}
which completes the proof.
\end{proof}
\begin{example}
Let $m$ be an odd prime with $m\equiv \pm 1 \pmod{8}$. Let
$c_1(x)=\delta+\sum_{i\in \mathbb Z_m^2\setminus \{0\} } x^i$
and
$c_2(x)=\delta+\sum_{i\in \mathbb Z_m\setminus Z_m^2 } x^i$,

where $\delta=0$ if $p\equiv 1 \pmod{8}$ and $\delta=1$ if $p\equiv -1 \pmod{8}$. Then,  $c_1(x)$ and $c_2(x)$ are the  generating idempotents of
 the even-like binary quadratic residue codes. Let $S_1$ and $S_2$ be defined as in Corollary \ref{cor:Si-Sj}. It is observed that
 $\{S_1, S_2\}=\{Z_m^2\setminus \{0\}, \mathbb Z_m\setminus Z_m^2 \}$  \cite{HP10}.
 From Corollary \ref{cor:Si-Sj}, $\mathcal C=\mathbb F_q[x] (c_1(x), c_2(x))$ is $\mu_{-1}$-LCD.
\end{example}

A $1$-generator QC code $\mathcal  C$ is called \emph{maximal} if any $1$-generator QC code $\mathcal C'$ containing $\mathcal C$ satisfies $\mathcal C=\mathcal C'$.
 It has been shown in \cite{SD90} that $\mathcal C=\mathbb F_q[x](c_1(x), \cdots, c_l(x))$ is a maximal $1$-generator QC code if and only if $\mathbf{gcd}(c_1(x), \cdots, c_l(x), x^m-1)=1$.
The following result characterizes all maximal $1$-generator QC LCD codes of index $l$.
\begin{corollary}\label{cor:maximal-1-generator-poly}
Let $m$ be a positive integer with $\mathbf{gcd}(m,q)=1$. Let $\mathcal C=\mathbb F_q[x] \mathbf c(x)$ be a maximal
$1$-generator QC code of index $l$ with $\mathbf c(x)=(c_1(x), \cdots, c_l(x))\in
 \left(\mathbb F_q[x]/(x^{m}-1)\right)^l$. Then $\mathcal C$ is  $\mu_{a}$-LCD if and only if
 $$\mathbf{gcd}(\sum_{j=1}^l c_j(x) c_j(x^{-a \pmod{m}}), x^m-1)=1.$$
 \end{corollary}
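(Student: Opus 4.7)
The plan is to derive this corollary directly by combining Corollary \ref{cor:1-generator-poly} with the characterization of maximal $1$-generator QC codes given by Séguin and Drolet in \cite{SD90}, which is explicitly recalled in the paragraph preceding the statement. No new machinery should be needed.

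First, I would invoke Corollary \ref{cor:1-generator-poly} applied to the generator $\mathbf c(x)=(c_1(x),\dots,c_l(x))$. That corollary asserts that $\mathcal C=\mathbb F_q[x]\mathbf c(x)$ is $\mu_a$-LCD if and only if
\[
\gcd\!\Bigl(\sum_{j=1}^{l} c_j(x)\,c_j\bigl(x^{-a\bmod m}\bigr),\, x^m-1\Bigr) \;=\; \gcd\!\bigl(c_1(x),\dots,c_l(x),\,x^m-1\bigr).
\]
This is the identity that encodes the $\mu_a$-LCD property at the level of the polynomial algebra $\mathbb F_q[x]/(x^m-1)$.

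Second, I would use the hypothesis of maximality. Recall from \cite{SD90} (quoted in the paragraph just before the statement) that $\mathcal C=\mathbb F_q[x](c_1(x),\dots,c_l(x))$ is a maximal $1$-generator QC code of index $l$ if and only if
\[
\gcd\!\bigl(c_1(x),\dots,c_l(x),\,x^m-1\bigr)=1.
\]
Substituting this into the equivalence obtained from Corollary \ref{cor:1-generator-poly} immediately yields that $\mathcal C$ is $\mu_a$-LCD if and only if $\gcd\!\bigl(\sum_{j=1}^{l} c_j(x)\,c_j(x^{-a\bmod m}),\, x^m-1\bigr)=1$, as desired.

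In short, there is no genuine obstacle; the only care to exercise is to make sure that the condition $\gcd(\prod_j m_j,q)=\gcd(m,q)=1$ assumed in Corollary \ref{cor:1-generator-poly} is inherited here (it is, by the hypothesis $\gcd(m,q)=1$ of the corollary), and that the exponent $-a\bmod m$ is well-defined, which follows from the fact that the $\mu_a$ construction only makes sense when $\gcd(a,m)=1$ and hence $-a$ is invertible mod $m$. The proof therefore reduces to a one-line substitution, and this is what I would write.
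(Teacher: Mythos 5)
Your proposal is correct and matches the paper's own argument exactly: both invoke Corollary \ref{cor:1-generator-poly} and then substitute the Séguin--Drolet characterization of maximality, $\gcd(c_1(x),\dots,c_l(x),x^m-1)=1$, into the resulting equivalence. The extra remarks on the hypotheses $\gcd(m,q)=1$ and $\gcd(a,m)=1$ are harmless and do not change the route.
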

 \begin{proof}
The proof comes from  Corollary \ref{cor:1-generator-poly} and the fact that $\mathbf{gcd}(c_1(x), \cdots, c_l(x), x^m-1)=1$.
 \end{proof}

From Corollary \ref{cor:maximal-1-generator-poly}, we derive the following main result on
LCD maximal $1$-generator QC codes.
\begin{theorem}
Let $m$ be an odd positive integer and $q$ be a power of $2$. Then, a maximal $1$-generator QC code $\mathcal C$ of index $2$ over $\mathbb F_q[x]/(x^m+1)$ is $\mu_{-1}$-
LCD  if and only if there exists a unique $c(x)\in \mathbb F_q[x]/(x^m+1)$ such that $\mathcal C=\mathbb F_q[x](c(x), c(x)+1)$.
Further, the total number of $\mu_{-1}$-LCD maximal $1$-generator QC codes in $\left( \mathbb F_q[x]/(x^m+1) \right)^2$  is $q^m$.
\end{theorem}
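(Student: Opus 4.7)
The plan is to apply Corollary \ref{cor:maximal-1-generator-poly} with $a = -1$ and exploit the characteristic-$2$ simplification. Since $q$ is a power of $2$ and $m$ is odd, $x^m+1 = x^m-1$ in $\mathbb F_q[x]$ and $\gcd(m,q) = 1$, so the hypotheses of the corollary apply. With $a = -1$ we have $x^{-a \bmod m} = x$, so the corollary asserts that a maximal $1$-generator QC code $\mathcal C = \mathbb F_q[x](c_1(x), c_2(x))$ of index $2$ is $\mu_{-1}$-LCD if and only if $\gcd(c_1(x)^2 + c_2(x)^2, x^m+1) = 1$. In characteristic $2$, $c_1^2 + c_2^2 = (c_1+c_2)^2$, so this is equivalent to $c_1(x) + c_2(x)$ being a unit in $R := \mathbb F_q[x]/(x^m+1)$.

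For the forward direction, suppose $\mathcal C = \mathbb F_q[x](c_1(x), c_2(x))$ is $\mu_{-1}$-LCD and maximal. Let $u(x) = c_1(x) + c_2(x)$, which is a unit in $R$ by the reduction above, and set $c(x) = u(x)^{-1} c_1(x)$. Then $u(x)^{-1} c_2(x) = u(x)^{-1}(c_1(x) + u(x)) = c(x) + 1$. Since multiplying a generator of a $1$-generator $\mathbb F_q[x]$-module by a unit of $R$ preserves the module, we obtain $\mathcal C = \mathbb F_q[x](c(x), c(x)+1)$. For uniqueness, suppose $\mathbb F_q[x](c, c+1) = \mathbb F_q[x](c', c'+1)$; then there exists $v(x) \in R$ with $(c', c'+1) = v(x)\cdot(c, c+1)$, i.e., $c' = vc$ and $c'+1 = v(c+1) = vc + v$. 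Subtracting gives $v = 1$ in $R$, hence $c' = c$.

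For the converse and the count, observe that for every $c(x) \in R$, the pair $(c(x), c(x)+1)$ satisfies $c + (c+1) = 1$ in characteristic $2$, so $\gcd(c, c+1, x^m+1) = 1$ (the code $\mathbb F_q[x](c, c+1)$ is therefore maximal) and $\gcd(c + (c+1), x^m+1) = \gcd(1, x^m+1) = 1$ (the code is $\mu_{-1}$-LCD by our characterization). Combined with the uniqueness shown in the previous step, the map $c(x) \mapsto \mathbb F_q[x](c(x), c(x)+1)$ is a bijection between $R$ and the set of $\mu_{-1}$-LCD maximal $1$-generator QC codes of index $2$ over $R$. The total count is therefore $|R| = q^m$.

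There is no serious obstacle in this argument: the only real subtlety is the characteristic-$2$ identity $c_1^2 + c_2^2 = (c_1 + c_2)^2$, which reduces a quadratic LCD condition to the linear condition ``$c_1 + c_2$ is a unit of $R$,'' after which one can normalize each generator to have the form $(c, c+1)$ and everything else is routine module arithmetic in $R$.
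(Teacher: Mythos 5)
Your proposal is correct and follows essentially the same route as the paper: apply Corollary \ref{cor:maximal-1-generator-poly} with $a=-1$, use the characteristic-$2$ identity $c_1^2+c_2^2=(c_1+c_2)^2$ to reduce the LCD condition to $c_1+c_2$ being a unit, normalize the generator by that unit to reach the form $(c,c+1)$, and prove uniqueness by elementary module arithmetic. Your write-up is if anything slightly more complete, since you explicitly check maximality of $\mathbb F_q[x](c(x),c(x)+1)$ before invoking the corollary and spell out the bijection underlying the count $q^m$.
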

\begin{proof}
If $\mathcal C=\mathbb F_q[x](c(x), c(x)+1)$, then from Corollary \ref{cor:maximal-1-generator-poly}, $\mathcal  C$ is $\mu_{-1}$-LCD maximal $1$-generator QC code.

Assume that $\mathcal C$ is a $\mu_{-1}$-LCD  maximal $1$-generator QC code in $(\mathbb F_q[x]/(x^m+1))^2$. Then, there exist
$c_1(x), c_2(x) \in (\mathbb F_q[x]/(x^m+1)$ such that $\mathcal C= \mathbb F_q[x](c_1(x), c_2(x))$ and $\mathbf{gcd}(c_1(x), c_2(x))=1$. Since $\mathcal  C$ is
$\mu_{-1}$-LCD, from Corollary \ref{cor:maximal-1-generator-poly}, $\mathbf{gcd}(c_3(x), x^m+1)=1$,  where $c_3(x)=c_1(x)+c_2(x)$. Thus, there is a unique $c_3'(x) \in \mathbb F_q[x]/(x^m+1)$
such that $c_3(x)c_3'(x) \equiv 1 \pmod{x^m+1}$, that is, $c_3(x)$ and $c_3'(x)$ are invertible in $\mathbb F_q[x]/(x^m+1)$. Then,
\begin{align*}
\mathcal C&= \{a(x)(c_1(x), c_2(x)): a(x) \in \mathbb F_q[x]/(x^m+1) \} \\
&=\{a(x)(c_1(x), c_1(x)+c_3(x)): a(x) \in \mathbb F_q[x]/(x^m+1) \}\\
&=\{a(x)c_3'(x)(c_1(x), c_1(x)+c_3(x)): a(x) \in \mathbb F_q[x]/(x^m+1) \}\\
&=\{a(x)(c(x), c(x)+1): a(x) \in \mathbb F_q[x]/(x^m+1) \},
\end{align*}
where $c(x)=c_1(x)c_3'(x)$. Hence, $\mathcal C=\mathbb F_q[x](c(x), c(x)+1)$.

If there is some $b(x)\in \mathbb F_q[x]/(x^m+1)$ such that
$\mathcal C=\mathbb F_q[x](c(x), c(x)+1)=\mathbb F_q[x](b(x), b(x)+1)$, one gets
\begin{align*}
c(x)(b(x)+1)\equiv b(x)(c(x)+1) \pmod{x^m+1}.
\end{align*}
Then, $b(x)\equiv c(x) \pmod{x^m+1}$, which completes the proof.
\end{proof}

\subsection{Asymptotically good $\mu_{-1}$-LCD GQC codes}
In this subsection, we will  construct a class of $\mu_{-1}$-LDC GQC codes and prove the existence of  asymptotically good $\mu_{-1}$-LDC GQC codes.

Let $q$ be a prime power and assume that $m_1, m_2, \cdots , m_l$ are pairwise distinct
positive integers relatively prime to $q$. Let $m$ be  the least common multiple of $m_1, m_2, \cdots, m_{l}$  and
$\xi$ be a primitive $m$-th root of unity in the  algebraic closure of $\mathbb F_q$. Then,
$\xi^{\hat{m}_j}$ is a primitive $m_j$-th root of unity, where $\hat{m}_j= \frac{m}{m_j}$ and $j\in \{1,2, \cdots, l\}$. Moreover,
\begin{align*}
\mathbb F_q[\xi^{\hat{m}_j}] \cong \mathbb F_q[x]/(\mathbb M_{\xi^{\hat{m}_j}}(x)) \cong \mathbb (\mathbb H_j(x) \mathbb F_q[x])/(x^{m_j}-1),
\end{align*}
where $\mathbb M_{\xi^{\hat{m}_j}}(x)$ is defined as in Equation (\ref{eq:M-xi-i}) and $\mathbb H_j(x) =\frac{x^{m_j}-1}{\mathbb M_{\xi^{\hat{m}_j}}(x)}$.
Hence, any linear code $C_{\hat{m}_j}$ in  $(\mathbb F_q[\xi^{\hat{m}_j}])^{r_j}$ can be identified as an $\mathbb F_q[x]$ submodule of
$(\mathbb (\mathbb H_j(x) \mathbb F_q[x])/(x^{m_j}-1))^{r_j}$, where $r_j$ is a positive integer.

For any $j\in \{1,2, \cdots, l\}$, let $C_{\hat{m}_j}$ be an Euclidean LCD code  over $\mathbb (\mathbb H_j(x) \mathbb F_q[x])/(x^{m_j}-1)$  with parameters $[r_j, k_j, d_j]$.
Now, we define the GQC code by
\begin{align}\label{eq:C-C1-Cl}
\mathcal C&= C_{\hat{m}_1}\times \cdots  \times C_{\hat{m}_l} \nonumber \\
&=\{(\mathbf c_1(x), \cdots, \mathbf c_l(x)):   \mathbf c_j(x) \in  C_{\hat{m}_j} \text{ for } j\in \{1,\cdots, l\}  \}.
\end{align}

Note that $\mathbb H_j(\xi^i) \neq 0$ if and only if $i\in \{\hat{m}_j q^k: j\in \{1, \cdots, l\}, k\in \{1, \cdots, m\}\}$.
From the definition of $\mathcal C$,  one observes that
\begin{align*}
C_i=\begin{cases}
(C_{\hat{m}_j})^{q^k}\times \prod_{k\in \{1, \cdots, l\} \setminus \{j\}} \underbrace{\{0\}, \cdots, \{0\}}_{r_k}, &  i=\hat{m}_j q^k
\cr 0, & \text{otherwise}
\end{cases}
\end{align*}
where $C_i$ ($i\in \mathbb Z_m$) are the constituents of $\mathcal C$ and $(C_{\hat{m}_j})^{q^k}=\{(c_1^{q^k}, \cdots, c_l^{q^k}): (c_1, \cdots, c_l)\in C_{\hat{m}_j}\}$.
Since $C_{\hat{m}_j}$ ($j\in \{1, \cdots, l\}$) are Euclidean LCD in $(\mathbb F_q[\xi^{\hat{m}_j}])^{r_j}$, $C_i \cap C_{i}^{\perp'}=\{0\}$ for $i\in \mathbb Z_m$.
From Corollary \ref{cor:mu-a-iq}, $\mathcal C$ is a $\mu_{-1}$-LCD GQC code in $\prod_{j=1}^{l} (\mathbb F_q[x]/(x^{m_j}-1))^{r_j}$. By Equation (\ref{eq:C-C1-Cl}), one observes that
$\mathcal C$ is
an $\mathbb F_q$-linear code with parameters
$\left [\sum_{j=1}^l m_j r_j, \sum_{j=1}^l k_j deg(\mathbb M_{\xi^{\hat{m}_j}}(x)),  d_{\mathcal C}\ge \min {\{d_1d_{\mathbb H_1}, \cdots, d_ld_{\mathbb H_l}\}}\right ]$, where $d_{\mathcal C}$
is the minimum distance of $\mathcal C$ and $d_{\mathbb H_j}$ is the minimum distance of the cyclic code $(\mathbb H_j(x) \mathbb F_q[x])/(x^{m_j}-1)$ for
$j\in \{1, \cdots, l\}$.
Thus,
\begin{align*}
R_{\mathcal C}= \frac{\sum_{j=1}^l k_j deg(\mathbb M_{\xi^{\hat{m}_j}}(x))}{\sum_{j=1}^l m_j r_j},
\end{align*}
and
\begin{align*}
\delta_{\mathcal C}= \frac{d_{\mathcal C}}{\sum_{j=1}^l m_j r_j}\ge \frac{\min {\{d_1d_{\mathbb H_1}, \cdots, d_ld_{\mathbb H_l}\}}}{\sum_{j=1}^l m_j r_j}.
\end{align*}
If $r_1=\cdots=r_l=r$, then
\begin{align}\label{eq:R-C}
R_{\mathcal C}= \sum_{j=1}^l \frac{ deg(\mathbb M_{\xi^{\hat{m}_j}}(x))}{\sum_{j=1}^l m_j} \frac{k_j}{r},
\end{align}
and
\begin{align}\label{eq:delta-C}
\delta_{\mathcal C}\ge     \frac{ \min{\{d_{\mathbb H_1}, \cdots,
d_{\mathbb H_l}\}}}{\sum_{j=1}^l m_j}   \min {\{\frac{d_1}{r}, \cdots, \frac{d_l}{r}\}}.
\end{align}
The above development yields the following main result of this section. We shall use Equations (\ref{eq:R-C}) and (\ref{eq:delta-C}) to show the existence of
 asymptotically good long $\mu_{-1}$-LCD GQC codes.
\begin{theorem}
Let $m_1, \cdots,  m_{l}$ be pairwise distinct positive integers with $\mathbf{gcd}(\prod_{j=1}^l m_j,q)=1$.
Then there exists an asymptotically good sequence of $\mu_{-1}$-LCD GQC codes $\mathcal \{\mathcal C^{(i)}\}_{i=1}^{\infty}$,
where $\mathcal C^{(i)}$ is a  GQC code in $\prod_{j=1}^l (\mathbb F_q[x]/(x^{m_j}-1))^{r^{(i)}}$ and $r^{(i)}$ is a positive integer.
\end{theorem}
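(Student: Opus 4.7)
The plan is to apply the product construction from Equation (\ref{eq:C-C1-Cl}) preceding the theorem, taking each constituent $C_{\hat{m}_j}$ to be an asymptotically good Euclidean LCD code over the field $\mathbb F_q[\xi^{\hat{m}_j}]$. All of the algebraic bookkeeping, namely the $\mu_{-1}$-LCD property of the product and the rate/distance estimates, is already in place through Equations (\ref{eq:R-C}) and (\ref{eq:delta-C}); what remains is to feed this setup with the right ingredients.

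First, I would invoke the known existence of asymptotically good Euclidean LCD code families over any finite field (established in \cite{GZS16} and refined in subsequent works). This produces, for each $j \in \{1,\ldots,l\}$, a sequence of Euclidean LCD codes $\{C_{\hat{m}_j}^{(i)}\}_{i \ge 1}$ over $\mathbb F_q[\xi^{\hat{m}_j}]$ with parameters $[r_j^{(i)}, k_j^{(i)}, d_j^{(i)}]$ such that $r_j^{(i)} \to \infty$, $\liminf_i k_j^{(i)}/r_j^{(i)} > 0$, and $\liminf_i d_j^{(i)}/r_j^{(i)} > 0$. Since the product (\ref{eq:C-C1-Cl}) requires a common outer length $r_1=\cdots=r_l=r^{(i)}$, I would synchronize the $l$ families to a single sequence $r^{(i)} \to \infty$, either by a diagonal extraction or by using the fact that the standard Gilbert--Varshamov type constructions furnish LCD codes of every sufficiently large length over a given finite field.

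With these ingredients in hand, I would define $\mathcal C^{(i)} := C_{\hat{m}_1}^{(i)} \times \cdots \times C_{\hat{m}_l}^{(i)}$ as in (\ref{eq:C-C1-Cl}). By the discussion preceding the theorem (applying Corollary \ref{cor:mu-a-iq}), $\mathcal C^{(i)}$ is a $\mu_{-1}$-LCD GQC code in $\prod_{j=1}^l (\mathbb F_q[x]/(x^{m_j}-1))^{r^{(i)}}$. From (\ref{eq:R-C}) and (\ref{eq:delta-C}), the coefficients $deg(\mathbb M_{\xi^{\hat{m}_j}}(x))/\sum_j m_j$ and $(\min_k d_{\mathbb H_k})/\sum_j m_j$ are fixed positive constants independent of $i$, while each ratio $k_j^{(i)}/r^{(i)}$ and $d_j^{(i)}/r^{(i)}$ is bounded below by a positive constant for large $i$; hence both $R_{\mathcal C^{(i)}}$ and $\delta_{\mathcal C^{(i)}}$ stay bounded away from zero while the total block lengths $\sum_j m_j r^{(i)} \to \infty$. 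This is precisely asymptotic goodness.

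The principal technical obstacle I foresee is the length-synchronization step: one must guarantee that asymptotically good Euclidean LCD code families exist simultaneously over the (possibly distinct) fields $\mathbb F_q[\xi^{\hat{m}_j}]$ at a common sequence of lengths. Existence over each individual finite field is standard, so this is really a matter of either restricting to a common subsequence of admissible lengths or invoking a single construction (for instance a random-coding argument) that produces LCD codes of every sufficiently large length over each given finite field. Once this synchronization is in place, the theorem follows immediately from the algebraic framework already established.
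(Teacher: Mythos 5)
Your proposal is correct and follows essentially the same route as the paper: both take asymptotically good Euclidean LCD codes over each field $\mathbb F_q[\xi^{\hat{m}_j}]$ (the paper invokes Sendrier's result \cite{Sen04} that LCD codes meet the Gilbert--Varshamov bound, which in particular supplies such codes at every sufficiently large common length $r^{(i)}$, so the synchronization you worry about is automatic), form the product code of Equation (\ref{eq:C-C1-Cl}), and read off the rate and relative-distance bounds from Equations (\ref{eq:R-C}) and (\ref{eq:delta-C}). Your explicit attention to the common-length issue is a point the paper passes over silently, but it does not change the argument.
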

\begin{proof}
Note that the asymptotically good sequence of Euclidean LCD codes over $\mathbb F_q[\xi^{\hat{m}_j}]$ exists by \cite{Sen04}. Thus, there exist positive constants $R_0$, $\delta_0$ and
Euclidean LCD codes $C_{\hat{m}_j}^{(i)}$ over $\mathbb F_q[\xi^{\hat{m}_j}]$  with parameters $[r^{(i)},k^{(i)}_j, d^{(i)}_j]$ such that
$\frac{k^{(i)}_j}{r^{(i)}}\ge R_0$ and $\frac{d^{(i)}_j}{r^{(i)}}\ge \delta_0$, where $j\in \{1,2, \cdots, l\}$ and
$i\in \{1,2, \cdots, +\infty\}$. Let $\mathcal C^{(i)}=\prod_{j=1}^l C_{\hat{m}_j}^{(i)}$ be the $\mu_{-1}$-LCD GQC code defined as in Equation (\ref{eq:C-C1-Cl}).
From Equations (\ref{eq:R-C}) and (\ref{eq:delta-C}), one obtains
\begin{align*}
R_{\mathcal C^{(i)}}\ge \frac{ \sum_{j=1}^l deg(\mathbb M_{\xi^{\hat{m}_j}}(x))}{\sum_{j=1}^l m_j} R_0
\end{align*}
and
\begin{align*}
\delta_{\mathcal C^{(i)}}\ge \frac{ \min{\{d_{\mathbb H_1}, \cdots,
d_{\mathbb H_l}\}}}{\sum_{j=1}^l m_j}   \delta_0.
\end{align*}
The desired result follows.

\end{proof}

\section{$\mu_{-1}$-LCD Abelian  codes}\label{sec:Abelian-LCD}
In \cite{CW16}, Cruz and Willems have investigated and characterized ideals in a group algebra which have Euclidean complementary duals. In this section, we shall consider
$\mu_{-1}$-LCD Abelian codes and show that all Abelian codes in  semi-simple communicative group algebra are $\mu_{-1}$-LCD.

Let $G$ be a finite Abelian group with identity $1$ and $\mathbb F_q[G]$ be the group algebra of $G$ over $\mathbb F_q$.
The elements in $\mathbb F_q[G]$ will be written as  $\sum_{g\in G} c_g g$, where $c_g\in \mathbb F_q$. The addition and the multiplication in $\mathbb F_q[G]$
are given as follows.
\begin{align*}
\left (\sum_{g\in G} a_g g \right )+\left (\sum_{g\in G} b_g g \right )=\sum_{g\in G} (a_g+b_g) g,
\end{align*}
and
\begin{align*}
\left (\sum_{g\in G} a_g g \right )\cdot \left (\sum_{g\in G} b_g g \right )=\sum_{g\in G} \left (\sum_{h\in G} a_h b_{gh^{-1}} \right) g,
\end{align*}
where $\sum_{g\in G} a_g g, \sum_{g\in G} b_g g  \in \mathbb F_q[G]$.
We define $\mu_{-1}$ on $\mathbb F_q[G]$ to be the map that sends $\sum_{g\in G} a_g g$ to $\sum_{g\in G} a_{g^{-1}} g$ for $\sum_{g\in G} a_g g \in \mathbb F_q[G]$.
Note that $\mu_{-1}: \mathbb F_q[G] \longrightarrow \mathbb F_q[G]$ defines an auto-isomorphism of $\mathbb F_q[G]$.

A linear code $\mathcal C$ is called an \emph{Abelian code} (for $G$ over the field $\mathbb F_q$) if $\mathcal  C$ is an ideal in the
group algebra $\mathbb F_q[G]$.

For $\mathbf b =\sum_{g\in G} b_g g , \mathbf c= \sum_{g\in G} c_g g \in \mathbb F_q[G]$, the Euclidean inner product and $\mu_{-1}$ inner product of $\mathbf b$ and $\mathbf c$
are respectively defined as follows.
\begin{align*}
<\mathbf b, \mathbf c>&=\sum_{g\in G} b_g c_g,\\
<\mathbf b, \mathbf c>_{\mu_{-1}}&=\sum_{g\in G} b_g c_{g^{-1}}.
\end{align*}
Then, $<\cdot, \cdot>$ is $G$-invariant, that is
\begin{align*}
<g \mathbf b,  g \mathbf c>=<\mathbf b,  \mathbf c>,
\end{align*}
where $g\in G$, and $\mathbf b,  \mathbf c\in \mathbb F_q[G]$. In particular, $<g \mathbf b,   \mathbf c>=<\mathbf b, g^{-1}  \mathbf c>= <\mathbf b, \mu_{-1}(g)  \mathbf c>$.
Thus, for any $\mathbf a, \mathbf b,  \mathbf c\in \mathbb F_q[G]$,
\begin{align}\label{eq:G-invariant}
<\mathbf a \mathbf b,   \mathbf c>= <\mathbf b, \mu_{-1}(\mathbf a)  \mathbf c>.
\end{align}

We have the following main result of this section.
\begin{theorem}\label{thm:idempotent}
Let $\mathcal C$ be an Abelian code in $\mathbb F_q[G]$. Then, $\mathcal  C$ is $\mu_{-1}$-LCD if and only if $\mathcal C$ is generated by an idempotent.
\end{theorem}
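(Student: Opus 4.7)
The plan is to exploit the algebraic structure of $\mathbb F_q[G]$ by first reinterpreting the $\mu_{-1}$ orthogonality condition in purely ring-theoretic terms, and then applying the standard ``annihilator equals idempotent complement'' argument.

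First, I would note the key identity
\begin{align*}
<\mathbf b,\mathbf c>_{\mu_{-1}}=\sum_{g\in G}b_g c_{g^{-1}}=(\mathbf b\cdot\mathbf c)_{1},
\end{align*}
that is, the $\mu_{-1}$ inner product equals the coefficient of the identity element $1$ in the group-algebra product. Using this together with Equation~(\ref{eq:G-invariant}) and the fact that $\mathcal C$ is an ideal, I would then show that
\begin{align*}
\mathcal C^{\perp_{\mu_{-1}}}=\mathrm{Ann}(\mathcal C):=\{\mathbf w\in\mathbb F_q[G]:\mathbf c\mathbf w=0 \text{ for all } \mathbf c\in\mathcal C\}.
\end{align*}
One inclusion is immediate since $(\mathbf c\mathbf w)_1=0$ whenever $\mathbf c\mathbf w=0$. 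For the other inclusion, if $\mathbf w\in\mathcal C^{\perp_{\mu_{-1}}}$, then for every $\mathbf a\in\mathbb F_q[G]$ we have $\mathbf a\mathbf c\in\mathcal C$, so $(\mathbf a\mathbf c\mathbf w)_1=0$; varying $\mathbf a$ through the standard basis $\{g\}_{g\in G}$ forces every coefficient of $\mathbf c\mathbf w$ to vanish, giving $\mathbf c\mathbf w=0$. This reduces the problem to showing: $\mathcal C\cap\mathrm{Ann}(\mathcal C)=\{0\}$ if and only if $\mathcal C$ is generated by an idempotent.

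For the ``if'' direction, suppose $\mathcal C=e\mathbb F_q[G]$ with $e^2=e$. Using commutativity of $\mathbb F_q[G]$, I would verify that $\mathrm{Ann}(\mathcal C)=(1-e)\mathbb F_q[G]$: indeed $e\mathbf w=0$ is equivalent to $\mathbf w=(1-e)\mathbf w$. Then if $\mathbf x=e\mathbf a=(1-e)\mathbf b$ lies in the intersection, multiplying by $e$ gives $\mathbf x=e\mathbf x=e(1-e)\mathbf b=0$. For the ``only if'' direction, I would use the standard dimension count $\dim_{\mathbb F_q}\mathcal C+\dim_{\mathbb F_q}\mathcal C^{\perp_{\mu_{-1}}}=|G|$ together with $\mathcal C\cap\mathrm{Ann}(\mathcal C)=\{0\}$ to conclude $\mathbb F_q[G]=\mathcal C\oplus\mathrm{Ann}(\mathcal C)$. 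Writing $1=e+f$ with $e\in\mathcal C$ and $f\in\mathrm{Ann}(\mathcal C)$, multiplication by $e$ yields $e=e^2+ef=e^2$ (since $ef=0$), so $e$ is idempotent; and for any $\mathbf c\in\mathcal C$, $\mathbf c=\mathbf c(e+f)=\mathbf c e\in e\mathbb F_q[G]$, which together with $e\mathbb F_q[G]\subseteq\mathcal C$ yields $\mathcal C=e\mathbb F_q[G]$.

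The main conceptual obstacle is establishing the identification $\mathcal C^{\perp_{\mu_{-1}}}=\mathrm{Ann}(\mathcal C)$, since this is precisely what makes the choice of $\sigma=\mu_{-1}$ (rather than the Euclidean inner product) compatible with the multiplicative structure; the rest is a Peirce-decomposition argument that is standard once the annihilator interpretation is in hand. I also note that no semisimplicity hypothesis is needed for the theorem as stated, although semisimplicity (via Maschke) would guarantee that every ideal is generated by an idempotent, thereby recovering the abstract's claim that all Abelian codes in a semisimple group algebra are $\sigma$-LCD.
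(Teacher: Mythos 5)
Your proposal is correct and follows essentially the same route as the paper's proof: the forward direction writes $1=\mathbf e+\mathbf f$ with $\mathbf e\in\mathcal C$ and $\mathbf f$ in the complement and deduces $\mathbf e^2=\mathbf e$ from $\mathbf e\mathbf f=0$, while the converse kills any element of $\mathcal C\cap\mathcal C^{\perp_{\mu_{-1}}}$ by multiplying through by the idempotent. The only presentational difference is that you make the identification $\mathcal C^{\perp_{\mu_{-1}}}=\mathrm{Ann}(\mathcal C)$ explicit, whereas the paper uses this compatibility only implicitly through Equation~(\ref{eq:G-invariant}); the underlying argument is the same.
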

\begin{proof}
First suppose that $\mathcal{C}$ is $\mu_{-1}$-LCD, that is, $ \mathcal C \cap (\mu_{-1}(\mathcal C))^{\perp}= \{0\}$  and
 $\mathcal C + (\mu_{-1}(\mathcal C))^{\perp}=\mathbb F_q[G]$. Thus, there exist $\mathbf e\in \mathcal C$ and $\mathbf f\in \mathcal C$
 such that $\mathbf e+\mathbf f=1$. From $\mathbf e+\mathbf f=1$, $\mathbb F_q[G] \mathbf e+ \mathbb F_q[G] \mathbf f=\mathbb F_q[G]$. One has $\mathcal C=\mathbb F_q[G] \mathbf e$.
Since $\mathbf e-\mathbf e^2=\mathbf e\mathbf f \in  \mathcal C \cap (\mu_{-1}(\mathcal C))^{\perp}$, one obtains $\mathbf e^2=\mathbf e$.

Conversely suppose that $\mathcal C=\mathbb F_q[G] \mathbf e$, where $\mathbf e^2=\mathbf e$. Let
$\mathbf c \mathbf e\in \mathcal C \cap (\mu_{-1}(\mathcal C))^{\perp}$. Then, for any $\mathbf b\in \mathbb F_q[G]$,
\begin{align*}
0=&<\mathbf c \mathbf e,  \mu_{-1}(\mathbf e\mathbf b)>\\
=& <\mathbf c \mathbf e^2, \mu_{-1}(\mathbf b)>\\
=&<\mathbf c \mathbf e, \mu_{-1}(\mathbf b)>,
\end{align*}
where the second identity follows from Equation (\ref{eq:G-invariant}).
Thus, $\mathbf c \mathbf e \in (\mu_{-1}(\mathbb F_q[G]))^{\perp}=\mathbb F_q[G]^{\perp}$. Hence, $\mathbf c \mathbf e=0$, which shows
$\mathcal C \cap (\mu_{-1}(\mathcal C))^{\perp}=\mathcal C \cap (\mathcal C)^{\perp_{\mu_{-1}}}=\{0\}$, that is,
$\mathcal  C $ is $\mu_{-1}$-LCD.

\end{proof}
The following consequence of Theorem \ref{thm:idempotent} shows that Abelian codes in the group algebra $\mathbb F_q[G]$ of $G$ (with $\mathbf{gcd}(\# G, q)=1$) are particular
$\sigma$-LCD codes.

\begin{corollary}
Let $q$ be a prime power and  $G$ be a  finite Abelian group with $\mathbf{gcd}(\# G, q)=1$. Then, all Abelian codes in $\mathbb F_q[G]$ are $\mu_{-1}$-LCD.
\end{corollary}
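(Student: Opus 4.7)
The plan is to reduce to Theorem~\ref{thm:idempotent}: it suffices to show that under the hypothesis $\gcd(\#G, q)=1$, every ideal $\mathcal C$ of $\mathbb F_q[G]$ is generated by an idempotent. The hypothesis $\gcd(\#G,q)=1$ is exactly the semi-simplicity hypothesis, so I would invoke Maschke's theorem to conclude that $\mathbb F_q[G]$ is a semi-simple $\mathbb F_q$-algebra. In particular, every two-sided ideal is a direct summand as an $\mathbb F_q[G]$-module.

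The key step is to translate this direct-summand property into the existence of a generating idempotent. Given an Abelian code $\mathcal C \trianglelefteq \mathbb F_q[G]$, by semi-simplicity there exists an ideal $\mathcal C'$ with $\mathbb F_q[G] = \mathcal C \oplus \mathcal C'$. Write the unit as $1 = e + f$ with $e \in \mathcal C$ and $f \in \mathcal C'$. Then $e = e\cdot 1 = e^2 + ef$. Since $ef$ lies in the intersection $\mathcal C \cap \mathcal C' = \{0\}$ (as both $\mathcal C$ and $\mathcal C'$ are ideals), we obtain $e^2 = e$. Likewise, for any $c \in \mathcal C$ we have $c = ce + cf$ with $cf \in \mathcal C \cap \mathcal C' = \{0\}$, hence $c = ce \in \mathbb F_q[G]\,e$, showing $\mathcal C = \mathbb F_q[G]\,e$.

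Having produced an idempotent generator, I would conclude by applying Theorem~\ref{thm:idempotent}, which yields that $\mathcal C$ is $\mu_{-1}$-LCD. The only real obstacle is the first step, namely justifying the semi-simple decomposition; this is standard (Maschke in characteristic coprime to $\#G$), so the argument is essentially a short chain of citations plus the elementary idempotent extraction above, rather than a new calculation.
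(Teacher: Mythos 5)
Your proposal is correct and follows essentially the same route as the paper: the paper likewise reduces to Theorem~\ref{thm:idempotent} by noting that semi-simplicity of $\mathbb F_q[G]$ (from $\gcd(\#G,q)=1$) forces every ideal to be generated by an idempotent, citing Berman for that fact where you instead carry out the standard idempotent extraction explicitly. Your extraction is sound (commutativity of $\mathbb F_q[G]$ guarantees the module complement is an ideal, so $ef$ and $cf$ land in $\mathcal C\cap\mathcal C'=\{0\}$), so the only difference is that you prove the cited fact rather than quote it.
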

\begin{proof}
Since $\mathbf{gcd}(\# G, q)=1$, the group algebra $\mathbb F_q[G]$ is semi-simple and  all ideals of $\mathbb F_q[G]$ are of the form $\mathcal C=\mathbb F_q[G] \mathbf e$,
where $\mathbf e \in  \mathbb F_q[G]$ is an idempotent element \cite{Ber67}. It completes the proof from Theorem \ref{thm:idempotent}.
\end{proof}

\begin{example}
Let $\mathcal G_{23}$ be the $[23, 12, 7]$  binary Golay code. Since $\mathcal G_{23}$ is an Abelian code in $\mathbb F_2[\mathbb Z_{23}]$,   $\mathcal G_{23}$ is a
$\mu_{-1}$-LCD codes with parameters $[23, 12,7]$. However there is no Euclidean LCD code with parameters $[23,12,7]$, since
any $[23,12,7]$ linear codes are permutational  equivalent to $\mathcal G_{23}$, which are not Euclidean LCD \cite{MS77}.
\end{example}

\section{Concluding Remarks}
In this paper, we first introduce
 the concept of   $\sigma$-LCD codes which generalizes the well-known Euclidean LCD codes, Hermitian LCD codes, and Galois LCD codes.
 Likewise Euclidean LCD codes, $\sigma$-LCD codes can also be used to construct LCP of codes. We show that, all $q$-ary ($q>2$) linear codes are $\sigma$-LCD and any binary linear code $\mathcal C$,  $\{0\}\times \mathcal C$ is  $\sigma$-LCD. Furthermore, we develop the theory of $\sigma$-LCD GQC codes by providing characterizations and constructions of asymptotically good $\sigma$-LCD GQC codes. Finally, we consider $\sigma$-LCD Abelian codes and prove  that all Abelian codes in a semi-simple group algebra are $\sigma$-LCD.
 Our results show that $\sigma$-LCD codes allow the construction of LCP of codes more easily and with more flexibility. Moreover, many results on the classical LCD codes and LCD GQC codes can be derived from our results on $\sigma$-LCD codes by fixing appropriate mappings $\sigma$.

\ifCLASSOPTIONcaptionsoff
  \newpage
\fi

\end{document}